\newtheorem{lemma}{Lemma}
\newtheorem{proposition}{Proposition}
\newtheorem{definition}{Definition}
\newtheorem{remark}{Remark}
\newtheorem{conjuncture}{Conjuncture}
\newcommand{\rev}[1]{{\color{blue}{#1}}}
\newcommand{\rev}[1]{#1}
\begin{document}

\title{The Impact of Network Design Interventions on the Security of Interdependent Systems}
\author{Pradeep Sharma Oruganti, Parinaz Naghizadeh, and Qadeer Ahmed
}

\maketitle

\begin{abstract}
We study the problem of defending a Cyber-Physical System (CPS) consisting of interdependent components with heterogeneous sensitivity to investments. In addition to the optimal allocation of limited security resources, we analyze the impact of an orthogonal set of defense strategies in the form of network design interventions in the CPS to protect it against the attacker. We first propose an algorithm to simplify the CPS attack graph to an equivalent form which reduces the computational requirements for characterizing the defender's optimal security investments. We then evaluate four types of design interventions in the network in the form of adding nodes in the attack graph, interpreted as introducing additional safeguards, introducing structural redundancies, introducing functional redundancies, and introducing new functionalities. 
We identify scenarios in which interventions that strengthen internal components of the CPS may be more beneficial than traditional approaches such as perimeter defense. We showcase our proposed approach in two practical use cases: a remote attack on an industrial CPS and a remote attack on an automotive system. We highlight how our results closely match recommendations made by security organizations and discuss the implications of our findings for CPS design. 
\end{abstract}



\section{Introduction}
\label{sec:introduction}

Vulnerabilities in modern Cyber-Physical Systems (CPS) are increasingly exploited by attackers to launch sophisticated attacks on their safety-critical components. Automation, interdependence between assets in a network, and connectivity between different networks, all complicate the task of protecting the many assets within a CPS. Further, modern attacks are initiated and choreographed over multiple assets in the network, with the attackers remaining undetected for long stretches of time as they work their way to the most critical targets \cite{falliere2011w32,greenberg2015hackers,slay2007lessons}. In response, CPS operators need to decide on an optimal allocation of their often limited security resources throughout a network by taking into account the functionality and security attributes of different components. 

Given the conflicting goals of the attacker and the CPS operator, game-theoretic modeling and analysis can be used to 
provide insights and recommendations for the operators' optimal security decisions. In particular, there has been significant work on security games on networks for attack detection and improving network resilience \cite{amin2013security,  nguyen2017multi, liu2010security, hota2018game, smith2020survey, zeng2019survey, milovsevic2019network, pirani2021game, pirani2021strategic, abdallah2020behavioral,  milani2020harnessing, oruganti2021impact}. Several of these works have used ``attack graph'' models to study attacks on interconnected CPS. The motivation for these models is that, to successfully compromise targets internal to the network, attackers generally initiate stepping-stone attacks from external nodes, and gradually work their way to the critical assets. As such, the nodes in the attack graph are used to represent the CPS assets, while the connectivity between them shows all the components that an attacker needs to (sequentially) compromise in order to reach the CPS's most critical assets. In this paper, we similarly use an attack graph model to analyze how a CPS defender can optimally deploy its security resources to best protect the CPS against an attacker. 

\subsection{Contributions and paper overview}\label{sec:contributions}
We present two main extensions over existing works that have used an attack graph formalization to study CPS security: analyzing optimal security investments when assets have heterogeneous return-on-investment, and assessing the impacts of network design interventions. We detail each of these extensions, along with the main analytical and practical implications of considering them. 

\subsubsection{Modeling assets' return-on-investment} First, we extend the attack graph models studied in prior works (e.g.~\cite{hota2018game, abdallah2020behavioral, oruganti2021impact}) by introducing a \emph{return-on-investment} feature, $\kappa_i$, for each asset $v_i$. This term, which is heterogeneous across assets, can be thought of as the rate of decrease in that asset's security risk per unit of investment. This captures realistic scenarios in which investing in some assets can provide better ``bang for the buck''. To the best of our knowledge, an attack graph with non-uniform node sensitivities has only been considered in \cite{abdallah2021morshed} but with a primary focus on numerical experiments. Our work therefore extends this literature by introducing nodes' sensitivities to investments and providing an analytical study of the resulting games.

In particular, in Section~\ref{sec: reductions}, we present an algorithm for transforming the attack graph of the resulting security game into an ``equivalent'' reduced form graph which considerably simplifies the computation load of identifying the optimal security investments and assessing the expected loss of the network (this is achieved by reducing the number of decision variables and constraints in the underlying minmax optimization problem). We further show that the resulting equilibrium investment strategies may recommend spreading investments on assets internal to the network; this is contrary to previous results in the homogeneous return-on-investment model which could only identify perimeter defense (as opposed to strengthening internal assets) and ``min-cut'' strategies (as opposed to spreading investments) as optimal for the defender \cite{oruganti2021impact, abdallah2020behavioral}. 

\subsubsection{Assessing the impacts of design interventions} Our second contribution is to analyze an orthogonal set of defender actions in the form of \emph{network design interventions}. In particular, in addition to optimally allocating her security budget, the defender can choose to modify the CPS by adding new nodes in the attack graph (as detailed shortly). To the best of our knowledge, the only other work considering network design interventions is \cite{milani2020harnessing} which looks at hiding or revealing edges of an attack graph to change an attacker's perception, while the original network is not modified. 

Specifically, we focus on four possible re-design actions that can be taken by a CPS operator, which result in the introduction of additional nodes in the attack graph: 
\begin{itemize}
    \item[(a)] Adding a node in series with existing nodes in the graph. Examples include adding an encryption device, or requiring stronger passwords. 
    \item[(b)] Adding a node in parallel with an existing node. Examples include adding an additional user to the CPS. 
    \item[(c)] A hybrid case of simultaneously adding a series and a parallel node to an existing node. Examples include adding an additional sensor to provide redundant information for anomaly detection. 
    \item[(d)] Adding additional input nodes. Examples include introducing an additional functionality in the system, such as adding Bluetooth connectivity to a device. 
\end{itemize} 
In Section~\ref{sec: formations}, we consider each of these interventions when applied to a base network. We find the equilibrium outcomes of the security game on the modified attack graph, and compare the resulting expected network losses against that of the base network to elaborate on the security implications of each design intervention. 

\subsubsection{Numerical experiments and practical implications} In Section~\ref{sec: application}, we illustrate both our attack graph reduction algorithm and our proposed design interventions in two (numerical) use cases: a remote attack on an industrial SCADA system  and a remote attack on an automotive system. We also discuss the practical implications of our findings. 
In particular, for the SCADA system, we compare our recommended investment strategy against a perimeter defense strategy. The strategy recommended by our approach outperforms the perimeter defense strategy which matches current trends in industry practice \cite{bissell2019cost}. Further, in our analysis of the automotive system which follows the penetration testing report \cite{cai20190}, we find that our findings closely match the countermeasures recommended by security agencies. These observations indicate the potential value of our proposed framework as an analytical tool to help in strategic decision-making. 

\subsection{Related work}

Our work is within the literature on using an attack graph formalization in the study of CPS security \cite{nguyen2017multi, liu2010security, hota2018game, abdallah2020behavioral,  milani2020harnessing, oruganti2021impact, abdallah2021morshed}, as detailed in Section~\ref{sec:contributions}. Complimentary to these models, there exists a rich literature on \emph{network interdiction games} as an alternative approach to the study of optimal resource allocation in networks (see \cite{smith2020survey} for a survey). In general, the attacker in a network interdiction game aims to identify the shortest path from the source nodes to the target assets, and the defender's security investments are aimed at ``lengthening of the arcs'' in the attack graph to thwart the attacker. One difference between network interdiction and attack graph formulations is that the former models do not tend to capture intermediate losses from the traversed assets in the attack path, i.e., a loss is incurred only when the attacker reaches the target asset through its selected (shortest) path. In contrast, an attack graph formulation allows us to model the loss from the intermediate assets, with a loss incurred even if the attacker only manages to partially progress through an attack path. We further compare our attack graph model parameters with those in network interdiction games in Section~\ref{subsec: secgame}.  

Optimal cyber-risk management and security resource allocation has also been studied using concepts from Probabilistic Risk Analysis (PRA) in \cite{pate2021probabilistic,smith2018cyber, pate2018cyber}. The networks considered in these works are different from stepping-stone attack graphs, in that attacks may be targeted at any individual node directly. Attack graphs models based on Bayesian graphs and Markov chains have also been used to study the overall vulnerability of IT systems in \cite{xie2010using, liu2010security, abraham2015exploitability}; however, these works do not consider design interventions or optimal investment decisions. 

An earlier version of our work appeared in \cite{oruganti2021impact} for the homogeneous return-on-investment model. We extend \cite{oruganti2021impact} by generalizing our reduction algorithm and design interventions to account for heterogeneous returns-on-investment. We show that the recommendations from our new model outperform the perimeter defense strategies recommended by \cite{oruganti2021impact} when asset sensitivities are taken into account, and use two new case studies to show that our findings are close to realistic manufacturer decisions and security agencies' recommendations. 
\section{The Security Game Framework}
\label{sec: framework}

\subsection{The attack graph}
\label{subsec: atkgrf}
We consider a cyber-physical system (CPS) modeled as an \rev{acyclic directed attack graph} $\mathcal{G} = \{\mathcal{V}, \mathcal{E}\}$, where $\mathcal{V}$ represents the set of nodes and $\mathcal{E}$ represents the set of edges of the graph. A directed edge $(v_i, v_j) \in \mathcal{E}$ connecting node $v_i$ to node $v_j$ indicates that an attack on $v_j$ can be launched once $v_i$ is compromised. The attacks can be initiated from any of the outermost \emph{entry} or \emph{source} nodes of the graph, and are aiming to reach the \emph{target} or \emph{goal} asset. The set of entry nodes is represented as $\mathcal{V}_s \subseteq \mathcal{V}$ and the unique target node is $v_g \in \mathcal{V}$. 

A path $P_{ij}$ between nodes $v_i$ and $v_j$ is a sequence of connected nodes $\{v_i, v_{i+1}, \ldots, v_j\}$, i.e. $P_{ij}$ = $\{v_i \rightarrow v_{i+1} \rightarrow \cdots \rightarrow v_j\}$; let $\mathcal{P}_{ij}$ denote the set of all such paths. All the nodes that can be reached from a node $v \in \mathcal{V}$ (through one or more steps \rev{and including $v$ itself}) are denoted as $Post(v)$, and all the nodes from which $v$ \rev{(including itself)} can be reached are denoted $Pre(v)$. Each node $v_i$ is endowed with a stand-alone loss (financial or functional) $L_i \geq 0$, incurred if the node is successfully compromised. We assume $L_g > 0$, where $L_g$ is the loss associated with the target asset $v_g$. Table~\ref{t:notation} in Appendix~\ref{app:notation} 
summarizes our notation.

\subsection{The security game}
\label{subsec: secgame}
We study a Stackelberg game between an attacker and a defender. The defender acts first by 
deploying defense resources over the nodes of $\mathcal{G}$. Let $x_i\in \mathbb{R}_{\geq 0}$ denote the security investment on node $v_i$. We assume that given an investment $x_i$, the probability of successful attack on node $v_i$ is given by:
\begin{equation}
	p_i(x_i) = p_i^0 e^{-\kappa_ix_i}
\end{equation}
Here, $p_i^0 \in (0, 1]$ denotes the default probability of compromise under no investment, and $\kappa_i \geq 1$ is a node's sensitivity to investments (with higher $\kappa$ indicating higher marginal benefit-on-investment). {Similar models have been considered in prior works \cite{hota2018game, abdallah2020behavioral, oruganti2021impact} when $\kappa_i=1, \forall ~i$. Our work extends these works by introducing node sensitivities and providing an analytical study of the resulting games.}\footnote{{Similar elements appear in shortest path network interdiction game formulations~\cite{smith2020survey}. Specifically, an arc $(v_i, v_j)$ in those models has length $c_{ij}+d_{ij}x_{ij}$ given the (typically binary) interdiction decision $x_{ij}$. The parameters $p_i^0$ and $\kappa_i$ in our model are similar to $c_{ij}$ and $d_{ij}$ in such models.}} 

\rev{We consider a game of full information, i.e., the attacker and defender both have knowledge of the network topology, all node attributes, and each other's utility functions and action sets.} The attacker's action consists of selecting one path $P_{sg} \in \mathcal{P}_{sg}$ to initiate a sequence of attacks starting from some $v_s \in \mathcal{V}_s$ with the objective to reach and compromise the target node $v_g$. Assuming a worst-case attacker, its goal is to identify the path to perform stepping-stone attacks that would lead to the maximum expected loss on the CPS. In response, the defender chooses an investment profile $\mathbf{x}=[x_1, x_2, \ldots, x_{|\mathcal{V}|}]$ to minimize the loss in face of such attacker. 

Formally, the defender solves the following problem: 
\begin{equation}
\begin{gathered}
    \min_{\mathbf{x}} \max_{P_{sg}\in\mathcal{P}_{sg}} ~ \sum_{v_i \in P_{sg}}L_i  \prod_{v_j\in \rev{P_{sg}}\cap Pre(v_j)} p_j(x_j)~ \\
    \text{s.t.}\quad \sum_{i = 1}^{{|\mathcal{V}|}} x_i \leq B~, \text{ and }~ x_i \geq 0,~ \forall i\in\mathcal{V}~.
\end{gathered}
\label{eq: atkOptim}
\end{equation} 
Here, $B$ is the security budget available to the defender. We use $\mathbf{x}^*$ and $\mathbf{L}^*$ to denote the optimal solution of \eqref{eq: atkOptim} and the expected loss under this investment profile, respectively.  
The solution to \eqref{eq: atkOptim} determines the Stackelberg equilibrium strategies for the defender. \rev{We note that the objective function is strictly convex, and the feasible region is non-empty and compact; therefore, a solution to \eqref{eq: atkOptim} exists and is unique.} 

Throughout our analysis, we assume the defender can place investments on nodes preceding the target node to protect it, but not on the target itself (i.e. $x^*_g = 0$). This is a mild assumption, and resembles real-life scenarios where security investments on certain components cannot be made due to reasons such as conformance to standards, functional requirements, or ownership.  
\rev{Additionally, we assume that the defender has access to a \emph{sufficient} budget $B$, formally stated below. The proof is provided in Appendix \ref{app: sufficient_proof}.}
\begin{lemma}[Sufficient budget]
\label{lemma: sufficient-budget}
    \rev{For any given instance of the Stackelberg game on the attack graph, there exists a \emph{sufficient budget} $B$ such that if the optimal investment on a non-entry node $v_j$ is $x^*_j=0$ at the solution of \eqref{eq: atkOptim} given budget $B$, then $x^*_j=0$ at the solution of \eqref{eq: atkOptim} under any $B'\geq B$.}
\end{lemma}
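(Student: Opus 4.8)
The plan is to treat \eqref{eq: atkOptim} as a parametric convex program in the scalar parameter $B$ and to track how the optimal allocation is reshaped as the budget grows. First I would record the regularity facts that follow from the structure already noted in the excerpt: the inner maximization ranges over the finite set $\mathcal{P}_{sg}$, each path objective $f_P(\mathbf{x})=\sum_{v_i\in P}L_i\prod_{v_j\in P\cap Pre(v_i)}p_j(x_j)$ is convex, so $\max_P f_P$ is convex; together with the stated strict convexity this makes $\mathbf{x}^*(B)$ unique, and a standard parametric-optimization (Berge-type) argument makes $B\mapsto\mathbf{x}^*(B)$ continuous and the value function $V(B):=\mathbf{L}^*(B)$ convex and non-increasing. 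Since the objective is strictly decreasing in $x_i$ for any node lying on a currently-worst path, the budget constraint binds, i.e.\ $\sum_i x_i^*=B$. Moreover $V(B)\to 0$ as $B\to\infty$, because driving every entry node's success probability to zero drives every path product (hence every loss term) to zero; consequently the marginal value of budget $\mu(B)\ge 0$ is non-increasing in $B$ and tends to $0$.

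Next I would extract the optimality structure via a minimax/KKT analysis: writing the inner max as a maximization over a simplex of path weights $\lambda$ and invoking Sion's theorem, a saddle point $(\mathbf{x}^*,\lambda^*)$ satisfies, for each node $v_i$, that its marginal loss reduction $\kappa_i\,\rho_i(\mathbf{x}^*)$ equals $\mu(B)$ when $x_i^*>0$ and is at most $\mu(B)$ when $x_i^*=0$, where $\rho_i(\mathbf{x})=\sum_{P\ni v_i}\lambda^*_P\,R^P_i(\mathbf{x})$ is the equilibrium-weighted residual loss guarded by $v_i$ (the loss terms of $v_i$ and its downstream nodes on the active paths). The crucial structural input is an exchange/domination argument: because the success probability of the entry node $v_s$ of any path $P$ multiplies every loss term that $p_j(x_j)$ multiplies for a downstream non-entry node $v_j\in P$, and additionally multiplies $v_s$'s own standalone loss $L_s$ and the intermediate losses, investment placed upstream weakly dominates investment placed on $v_j$ for protecting $P$. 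In particular the standalone source loss $L_s\,p_s(x_s)$ can be driven down only by $x_s$ itself, so entry nodes retain strictly positive marginal benefit and absorb the growing budget as $B\to\infty$.

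From this I would draw two conclusions. First, optimal non-entry investments are uniformly bounded in $B$ and converge as $B\to\infty$: once the sources are driven far enough that all path products are near zero, the residual loss $\rho_j$ guarded by any non-entry $v_j$ is already negligible, so unbounded investment on $v_j$ is never warranted. Second, the solution path is piecewise-analytic, with active-set changes (which paths are worst, and which nodes sit at $x_i=0$) occurring only at isolated budget values; boundedness and convergence of the non-entry coordinates preclude infinitely many such changes at large budgets, so there is a finite threshold beyond which the active set, and in particular the set of non-entry nodes with $x_i^*=0$, is frozen. Declaring this threshold the sufficient budget $B$ yields the statement. I expect the main obstacle to be precisely this last point --- ruling out that a non-entry node toggles back from $x_j^*=0$ to $x_j^*>0$ as the budget increases. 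This is where the domination argument does the real work: since protecting any target reachable through $v_j$ can always be accomplished at least as cheaply through the upstream source nodes that already receive (unbounded) investment, the defender never has an incentive to re-activate a dropped downstream node, so the zero-set among non-entry nodes can only stabilize once the budget is large.
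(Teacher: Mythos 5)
There is a genuine gap, and it sits exactly where you yourself predicted the ``real work'' happens. Your exchange/domination claim --- that investment placed upstream weakly dominates investment placed on a downstream node $v_j$ because the entry node's success probability multiplies every loss term that $p_j(x_j)$ multiplies --- is valid only in the homogeneous return-on-investment model ($\kappa_i \equiv \kappa$), where it indeed yields perimeter-defense optimality as in prior work. In this paper's heterogeneous model it is false: a unit of investment downstream scales the relevant terms by $e^{-\kappa_j}$ rather than $e^{-\kappa_s}$, so when $\kappa_j$ is sufficiently larger than $\kappa_s$ the downstream investment strictly wins even though it protects fewer terms. This is not a corner case but the paper's central point; for instance, Lemma~\ref{lemma:series_zero_investments} shows the inner node of a series pair receives strictly positive investment whenever $L_{m-1} < (\kappa_m/\kappa_{m-1}-1)L_m$. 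Since you invoke domination both to argue that entry nodes absorb the growing budget and, crucially, to rule out a dropped non-entry node being re-activated at a larger budget, the proof does not close. Your fallback --- piecewise analyticity plus boundedness and convergence of the non-entry coordinates --- is insufficient on its own: it permits infinitely many isolated active-set changes (toggles at $B=1,2,3,\dots$ accumulate only at infinity), which is exactly the behavior you must exclude.

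The paper's own proof avoids this machinery entirely and is constructive: by the closed-form KKT solutions for each series, parallel, and input subnetwork (Lemmas~\ref{lemma:series_zero_investments}--\ref{lemma: inputs}, collected in Appendix~\ref{app: optimal_investments}), the optimal investments on all non-entry nodes are explicit expressions in the $L$'s and $\kappa$'s that do not involve the budget at all; only the entry/root node of each subnetwork absorbs the residual $T - \sum_j x^*_j$. A sufficient budget is then simply the smallest $B$ for which these residuals are nonnegative, and for any $B' \geq B$ the non-entry investments --- in particular which of them are zero --- are literally unchanged. If you want to salvage your parametric approach, the step to prove is this budget-independence directly: substitute the binding budget constraint into the objective; because every loss term on a path through entry node $v_s$ carries the factor $e^{-\kappa_s x_s}$, the budget factors out of the objective (as $e^{-\kappa_s B}$ in the single-entry case), leaving an optimization over the non-entry coordinates that does not depend on $B$. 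That single observation replaces Berge, Sion, and the analyticity argument, and it makes the ``no re-activation'' issue vanish rather than needing to be argued.
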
 

This choice still allows us to evaluate how the defender prioritizes the expenditure of a limited budget, without considering cases in which some nodes are not attended to due to lack of resources. \rev{In other words, if a node receives zero investment in the optimal profile under sufficient budget, it will continue receiving no investment even if the defender procures more security budget.} \rev{All following analysis is performed assuming sufficient budget is available.}\footnote{\rev{We discuss the insufficient budget case in Appendix \ref{app:insufficient}}.}
\section{Attack Graph Reductions}
\label{sec: reductions}

\begin{figure}[t]
    \centering
    \includegraphics[scale=0.2]{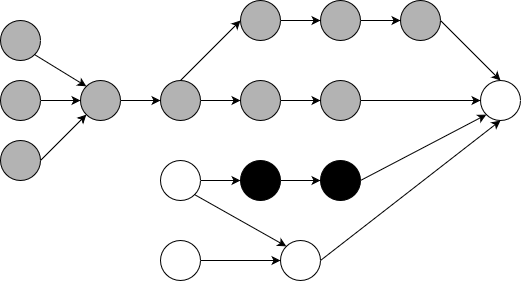}
    \caption{An attack graph in its original form.}
    \label{fig: original graph}
\vspace{0.1in}
    \captionsetup{justification=centering}
    \includegraphics[width=0.2\columnwidth]{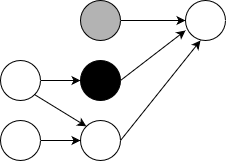}
    \caption{Reduced attack graph of Fig.~\ref{fig: original graph}, \rev{with the gray/black nodes in Fig.~\ref{fig: original graph} reduced to a single gray/black node.}}
    \label{fig: reduced graph}
\end{figure}

We begin our analysis by showing that the attack graph $\mathcal{G}$ of the security game described in Section~\ref{sec: framework} can be transformed into an ``equivalent'' reduced form which considerably simplifies the optimization problem in \eqref{eq: atkOptim}. Formally, we define equivalence between attack graphs as follows. 
\begin{definition}[Equivalent graphs]\label{def: equivalent}
Two attack graphs $\mathcal{G}_1$ and $\mathcal{G}_2$ are \emph{equivalent} if they have the same expected loss $\mathbf{L}^*$ under their respective optimal strategies $\mathbf{x}^*_1$ and $\mathbf{x}^*_2$. We denote this by $\mathcal{G}_1 \equiv \mathcal{G}_2$. 
\end{definition}

Our motivation for proposing such attack graph reductions is two-fold. First, our reduction procedure leads to an equivalent attack graph with a reduced number of nodes, which simplifies problem \eqref{eq: atkOptim} by reducing the number of decision variables. Moreover, we propose using this reduction algorithm in conjunction with our network re-design interventions presented in Section~\ref{sec: formations}. Specifically, we are in general interested in evaluating whether a network re-design intervention can be effective by reducing the expected loss in the network. To this end, it is sufficient to compare the losses on the reduced forms of the attack graphs before and after the intervention. This will in turn reduce the computational requirement when assessing different candidate interventions. \rev{An illustration of our reduction procedure's outcome is shown in Figs.~\ref{fig: original graph} and \ref{fig: reduced graph}.}
 
\rev{To see why the computational load of security assessment can be lowered by our approach, first} note that {through the addition of a variable, problem \eqref{eq: atkOptim} can be converted into a minimization problem with $|\mathcal{P}_{sg}|$ inequality constraints where $\mathcal{P}_{sg}$ denotes the set of all the attack paths leading to $v_g$. The current optimization problem has a total of $m = n + p + 1$ constraints, were $n$ indicates the number of assets, $p$ indicates the number of paths from the source nodes to the target, and the one additional constraint indicates the budget constraint. In general, using interior-point methods, solving this optimization problem with $n$ variables and $m$ constraints has an overall computational complexity of $O(n^{1.5}m^3\log(\frac{1}{\epsilon}))$, where $\epsilon$ indicates the degree of accuracy within which the solution is obtained \cite{hochbaum2007complexity, polik2010interior}. Accordingly, a reduction in the number of variables (together with the resulting reduction in the number of attack paths) will reduce the computation time to solve (\ref{eq: atkOptim}).} We provide numerical examples in Section~\ref{sec: application}.

\rev{We present our reduction procedure as a number of subroutines to be applied to series paths, parallel paths, and input nodes in the attack graph. All proofs are presented in Appendix~V and show that the attained reduced graph following each subroutine leads to an equivalent graph (in the sense of Definition~\ref{def: equivalent}) to the original graph.}

\subsection{Series path reductions}
We first present attack graph reductions which ultimately replace any series path $\{v_i \rightarrow v_{i+1} \rightarrow \cdots \rightarrow v_{i+n}\}$ with a single equivalent node. Formally, we say  $\{v_i \rightarrow v_{i+1} \rightarrow \cdots \rightarrow v_{i+n}\}$ is a series path if \rev{$Pre(v_{k+1})=\{v_{i}, v_{i+1}, \ldots v_{k}, v_{k+1}\}$ and $Post(v_{k}) = \{v_{k}, v_{k+1}, \ldots, v_{i+n}\}$ for all $k\in\{i, \ldots, i+n-1\}$}. For readability, the default loss $p_i^0$ for individual nodes is dropped in the remainder of this section. This is without loss of generality, as they can be subsumed in the stand-alone loss $L_i$ of the node.

{We begin by identifying nodes that will receive a zero investment at the optimal equilibrium profile, and show that these can be subsumed in their preceding nodes to obtain an equivalent attack graph.}

\begin{lemma}[Series zero investments]
\label{lemma:series_zero_investments}
{Consider a series link $\{v_i \rightarrow v_{i+1} \rightarrow \cdots v_{i+n-1} \rightarrow v_{i+n}\}$. 
\begin{itemize}
    \item Start at $m = i+n$. The pair of nodes $v_{m-1} \rightarrow v_m$ can be replaced with a node with $\kappa_{eq} = \kappa_{m-1}$ and $L_{eq} = L_{m-1} + L_{m}$, \rev{and $x_m^*=0$}, if and only if
    \begin{equation}
    \label{eq: series_end_conditions}
        \begin{gathered}
            \rev{L_{m-1} \geq (\frac{\kappa_{m}}{\kappa_{m-1}}-1)L_{m}~.}
        \end{gathered}
    \end{equation}
    \item If (\ref{eq: series_end_conditions}) is not satisfied, for $ i < m \leq i+n-1$, the pair of nodes $v_{m-1} \rightarrow v_m$ can be replaced with a node with $\kappa_{eq}=\kappa_{m-1}$ and $L_{eq}=L_{m-1} + L_{m}$, \rev{and $x_m^*=0$}, if and only if
    \begin{equation}
    \label{eq: series_mid_conditions}
        \begin{gathered}
            \kappa_{m-1} \geq \kappa_{m} \text{ or }     L_{m-1} \geq \Big(\frac{\frac{\kappa_{m-1}}{\kappa_{m-2}}-1}{1-\frac{\kappa_{m-1}}{\kappa_m}}\Big)L_m~.
        \end{gathered}
    \end{equation}
\end{itemize}
}

\end{lemma}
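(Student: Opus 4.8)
The plan is to restrict attention to the series segment and exploit the fact that every attack path crossing it visits $v_i,\dots,v_{i+n}$ in the same order, so the inner maximization in \eqref{eq: atkOptim} contributes the fixed term $\sum_{l} L_l\prod_{k\le l} p_k(x_k)$, up to the constant ``prefix'' probability $\prod_{k\le i-1}p_k$ of reaching $v_i$. Since each $p_k(x_k)=e^{-\kappa_k x_k}$ is log-convex and the feasible set is compact and convex, the restricted problem inherits the strict convexity and uniqueness already noted for \eqref{eq: atkOptim}, so I would characterize the optimum through its KKT conditions with a single multiplier $\mu$ for the budget constraint. Writing $R_j=L_j+p_{j+1}R_{j+1}$ for the residual expected loss accrued from $v_j$ onward, differentiation gives the marginal loss reduction $\kappa_j\big(\prod_{k\le j}p_k\big)R_j$, so stationarity reads $\kappa_j\big(\prod_{k\le j}p_k\big)R_j=\mu$ for every invested node and $\le\mu$ for every node with $x_j^*=0$.

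For the terminal pair I would treat $v_{m-1}\to v_m$ as a two-variable allocation with $R_m=L_m$. Equating the marginal reductions of $x_{m-1}$ and $x_m$ (the common prefix factor $\prod_{k\le m-1}p_k$ cancels) yields the unique interior stationary value $p_m^{\circ}=\tfrac{\kappa_{m-1}L_{m-1}}{(\kappa_m-\kappa_{m-1})L_m}$. The boundary solution $x_m^*=0$ is optimal exactly when this interior point is infeasible, i.e. when $\kappa_m\le\kappa_{m-1}$ (no positive root) or $p_m^{\circ}\ge 1$; rearranging the latter gives precisely \eqref{eq: series_end_conditions}. I would then close the terminal case by verifying equivalence in the sense of Definition~\ref{def: equivalent}: setting $x_m^*=0$ forces $p_m=1$, so the segment loss collapses to $\big(\prod_{k\le m-1}p_k\big)(L_{m-1}+L_m)$, which is identical to the contribution of a single node carrying $\kappa_{eq}=\kappa_{m-1}$ and $L_{eq}=L_{m-1}+L_m$; hence $\mathbf L^*$ is unchanged.

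The middle case is the main obstacle, and is where $\kappa_{m-2}$ must enter, since the bare comparison of $v_m$ against $v_{m-1}$ would again return \eqref{eq: series_end_conditions}. The point is that here $v_{m-1}$ is itself interior, so its investment is not free but is pinned by its own stationarity against its predecessor, $\kappa_{m-2}L_{m-2}=(\kappa_{m-1}-\kappa_{m-2})\,p_{m-1}R_{m-1}$. The plan is to impose the drop-out inequality for $v_m$ together with this coupling relation and eliminate the predecessor's variables; I expect the standalone loss $L_{m-2}$ to cancel once the coupling is used in ratio form, leaving a threshold on $L_{m-1}/L_m$ expressed through $\kappa_{m-2},\kappa_{m-1},\kappa_m$ alone that rearranges into \eqref{eq: series_mid_conditions}, with the disjunct $\kappa_{m-1}\ge\kappa_m$ being the degenerate regime in which $v_m$ can never out-earn $v_{m-1}$ at the margin and so drops regardless of the losses. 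Both directions of the ``if and only if'' would then follow from strict convexity, the stated inequality being exactly the boundary between a feasible interior optimum ($x_m^*>0$) and the boundary optimum $x_m^*=0$, after which I would again check Definition~\ref{def: equivalent}, now also confirming that the merge preserves the $v_{m-2}$--$v_{m-1}$ stationarity so the upstream optimum is untouched. The delicate points I anticipate are correctly tracking which nodes are active (so that the right neighbor sets the shadow price $\mu$) and confirming that the algebra collapses exactly to \eqref{eq: series_mid_conditions} rather than to a condition that retains $L_{m-2}$ or downstream residual terms.
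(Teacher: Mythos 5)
Your treatment of the terminal pair is correct and is essentially the paper's own argument in KKT language: the paper parametrizes the split of a local budget over $v_{m-1},v_m$, verifies convexity, and reads off \eqref{eq: series_end_conditions} from the sign of the derivative at the corner $x_m=0$, which is exactly your criterion that the interior stationary value $p_m^{\circ}=\tfrac{\kappa_{m-1}L_{m-1}}{(\kappa_m-\kappa_{m-1})L_m}$ be infeasible (no positive root when $\kappa_m\le\kappa_{m-1}$, or $p_m^{\circ}\ge 1$). The equivalence check via $p_m=1$ is also fine.

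The middle case is where your plan breaks, and the gap is conceptual, not algebraic. Your own drop-out condition for $v_m$ compares the marginals of $x_{m-1}$ and $x_m$; since both carry the same prefix factor $\prod_{k\le m-1}p_k$ (using $p_m=1$ when $x_m^*=0$), \emph{everything upstream of $v_{m-1}$ cancels}, and the condition reads $(\kappa_m-\kappa_{m-1})R_m\le\kappa_{m-1}L_{m-1}$. Nothing in this inequality involves $\kappa_{m-2}$ or $L_{m-2}$, so the coupling of $v_{m-1}$ with its predecessor is simply irrelevant: it cannot be used to eliminate anything, and your plan either stalls or, if forced, manufactures a wrong condition. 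What actually distinguishes a middle node from the terminal one is the \emph{downstream} residual: $R_m=L_m+p_{m+1}R_{m+1}>L_m$, with $p_{m+1}$ pinned at the optimum by the successor's stationarity. This is precisely how the paper proceeds: having established $x_{i+n}^*>0$ from the failure of \eqref{eq: series_end_conditions}, it substitutes the pinned interior value of $p_{i+n}$, which collapses the tail into an effective stand-alone loss $L_m\big/\big(1-\tfrac{\kappa_m}{\kappa_{m+1}}\big)$ decaying at rate $\kappa_m$, and then re-applies the two-node corner test against $v_{m-1}$. Carrying this out (equivalently, adding the successor's stationarity $\kappa_{m+1}p_{m+1}R_{m+1}=\mu$ to your KKT system) yields the threshold $L_{m-1}\ge\tfrac{\kappa_m/\kappa_{m-1}-1}{1-\kappa_m/\kappa_{m+1}}L_m$, i.e., a condition in the triple $(\kappa_{m-1},\kappa_m,\kappa_{m+1})$ --- predecessor, self, successor --- not $(\kappa_{m-2},\kappa_{m-1},\kappa_m)$. (The printed form of \eqref{eq: series_mid_conditions} carries an index shift relative to the paper's own proof, which is likely what led you to hunt for a $\kappa_{m-2}$ dependence; but no correct derivation produces dependence on the predecessor's predecessor, because upstream quantities cancel from the marginal comparison.)
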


Intuitively, under the conditions in the lemma, either the earlier node $v_i$ provides higher marginal return on investment, or has a substantially higher stand-alone loss than the subsequent node $v_j$, and as such, the defender is better off adopting ``perimeter defense'' and investing all budget on the outer node $v_i$. 
Note that this finding is consistent with previous results {in \cite{abdallah2020behavioral, oruganti2021impact, abdallah2021morshed}}, which had studied the special case of $\kappa_i=\kappa_j$. Lemma~\ref{lemma:series_zero_investments} further extends these results as it identifies conditions under which the defender distributes her investments over inner nodes as well. 

Note that by {repeated application of} Lemma~\ref{lemma:series_zero_investments}, {working our way from the last node backwards to the first node}, we can convert any series path $\{v_i \rightarrow v_{i+1} \rightarrow \cdots \rightarrow v_{i+n}\}$ to a reduced form in which all remaining nodes should have non-zero investments at the optimal investment profile. Following this, we conduct the remaining series reduction as detailed below, which will result in any series path being replaced by a single equivalent node.

\begin{lemma}[Series reduction]
\label{lemma:series-reduction}
Consider a series path  $\{v_i \rightarrow v_{i+1} \rightarrow \cdots \rightarrow v_{i+n}\rightarrow v_t\}$. Assume that $x^*_j \neq 0$ at all of these nodes. Then, this path can be replaced by a single equivalent node $v_{eq}$ with $\kappa_{eq}=\kappa_i$ and stand-alone loss
    \begin{equation}
    \begin{split}
        L_{eq} = \frac{\kappa_{i+1}L_i}{\kappa_{i+1}-\kappa_i}\prod_{j = i+1}^{i+n}\Big(\frac{\kappa_{j-1}}{\kappa_{j+1}}\Big(\frac{\kappa_{j+1}-\kappa_{j}}{\kappa_{j}-\kappa_{j-1}}\Big)\frac{L_{j-1}}{L_{j}}\Big)^{\frac{-\kappa_i}{\kappa_{j}}} \\
        \Big(\frac{\kappa_{i+n}}{\kappa_{t}-\kappa_{i+n}}\Big(\frac{L_{i+n}}{L_t}\Big)\Big)
        ^{\frac{-\kappa_i}{\kappa_{i+n}}}~.
    \end{split}
    \label{eq:series-long}
    \end{equation}
\end{lemma}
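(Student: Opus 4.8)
The plan is to exploit the fact that a series path carries a \emph{unique} attack path, so the inner maximization in \eqref{eq: atkOptim} is vacuous and the defender simply minimizes the (strictly convex) expected loss $\Phi = \sum_{k} L_k e^{-S_k}$, where $S_k := \sum_{j=i}^{k}\kappa_j x_j$ is the cumulative log-attenuation up to node $v_k$, subject to $\sum_m x_m \le B$. Since $\Phi$ is strictly decreasing in every investment on the path, the budget constraint is active at the optimum, so I would attach a single multiplier $\lambda>0$ to it and, using the hypothesis $x_m^*>0$ on the path, write the stationarity (KKT) conditions as interior equalities. The computation $\partial\Phi/\partial x_m = -\kappa_m\sum_{k\ge m}L_k e^{-S_k} = -\kappa_m T_m$, with $T_m := \sum_{k \ge m} L_k e^{-S_k}$ the ``tail loss'', turns stationarity into $\kappa_m T_m = \lambda$, i.e. $T_m = \lambda/\kappa_m$, for every active node.

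The elegant consequence is telescoping. Because $T_m - T_{m+1} = L_m e^{-S_m}$, I obtain the per-node identity $L_m e^{-S_m} = \lambda(\tfrac{1}{\kappa_m}-\tfrac{1}{\kappa_{m+1}}) = \tfrac{\lambda(\kappa_{m+1}-\kappa_m)}{\kappa_m\kappa_{m+1}}$ for interior nodes, together with a distinct boundary relation at the terminal node $v_t$. In particular the total optimal loss collapses to $\Phi^* = T_i = \lambda/\kappa_i$, which already pins down $\kappa_{eq}=\kappa_i$: the equivalent single node, investing its whole budget, has optimal loss $L_{eq}e^{-\kappa_i B}$, and its own stationarity gives a loss of the identical form $\lambda_{eq}/\kappa_i$.

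The remaining work is to solve the identities for each $S_m$ (hence each $x_m = (S_m-S_{m-1})/\kappa_m$), substitute into $B=\sum_m x_m$, and enforce $L_{eq}e^{-\kappa_i B} = \lambda/\kappa_i$, i.e. $L_{eq} = \tfrac{\lambda}{\kappa_i}\,e^{\kappa_i B}$. The crucial structural check is that the coefficient of $\ln\lambda$ in $\kappa_i B$ itself telescopes to exactly $1$, so that $\lambda$ cancels and $L_{eq}$ is \emph{independent of the budget}; exponentiating the surviving budget-independent terms then reproduces \eqref{eq:series-long}, with the entry node $v_i$ contributing the prefactor $\kappa_{i+1}L_i/(\kappa_{i+1}-\kappa_i)$, each interior node $v_j$ contributing a factor raised to $-\kappa_i/\kappa_j$, and the boundary relation at $v_t$ contributing the final factor.

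Finally, for the equivalence claim itself I would invoke Definition~\ref{def: equivalent}: since $L_{eq}$ does not depend on $B$, the node $v_{eq}$ reproduces the chain's entire optimal-loss-versus-budget relationship, so that comparing optimal losses yields the same $\mathbf{L}^*$; Lemma~\ref{lemma:series_zero_investments} is applied first to strip away any zero-investment nodes so the all-positive hypothesis genuinely holds, and Lemma~\ref{lemma: sufficient-budget} guarantees the active set is stable. I expect the main obstacle to be algebraic rather than conceptual: carefully isolating the two boundary nodes (the entry node $v_i$ and the terminal node $v_t$, whose stationarity conditions differ from the interior ones) and collapsing the telescoped sum of logarithms into the precise product of \eqref{eq:series-long} is exactly where sign and index errors are easiest to introduce.
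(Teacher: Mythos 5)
Your proposal is correct, but it takes a genuinely different route from the paper's. The paper never writes down a global multiplier system: it first solves the two-node link $\{v_i \rightarrow v_j\}$ as a one-dimensional convex minimization, obtaining $\kappa_{eq}=\kappa_i$ and $L_{eq}=\frac{\kappa_j L_i}{\kappa_j-\kappa_i}\bigl(\frac{\kappa_j-\kappa_i}{\kappa_i}\frac{L_j}{L_i}\bigr)^{\kappa_i/\kappa_j}$ with the allocated budget entering only through the factor $e^{-\kappa_i T}$, and then proceeds by induction, collapsing the chain pairwise from the tail (reduce the last two nodes to an intermediate equivalent node, fold that into its predecessor, and repeat). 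Your argument instead solves the entire chain in one shot via the stationarity conditions $\kappa_m T_m=\lambda$ on the tail losses, the telescoped identities $L_m e^{-S_m}=\lambda\bigl(\frac{1}{\kappa_m}-\frac{1}{\kappa_{m+1}}\bigr)$, and the observation that the coefficient of $\log\lambda$ in $\kappa_i B$ telescopes to $1$, which is what makes $L_{eq}$ budget-independent; I verified that these identities reproduce the paper's two- and three-node formulas, so the plan goes through. What your route buys: the optimal per-node investments fall out in closed form as a by-product, and indeed all of them except the entry node's are $\lambda$-free, which is exactly the content the paper must derive separately in Appendix~\ref{app: optimal_investments} and use to prove Lemma~\ref{lemma: sufficient-budget}; moreover the ordering $\kappa_m<\kappa_{m+1}$ needed for every logarithm to be well defined comes for free from $T_m>T_{m+1}$ combined with stationarity, rather than being imported from the failure of the conditions in Lemma~\ref{lemma:series_zero_investments}. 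What the paper's induction buys: every step is a single-variable minimization with no multiplier bookkeeping, and equivalence in the sense of Definition~\ref{def: equivalent} is re-established at each stage, so no global cancellation check is needed. One note that vindicates your closing worry about index errors: carried out carefully, the global computation assigns the final factor of \eqref{eq:series-long} the exponent $-\kappa_i/\kappa_t$ (consistent with the paper's own inductive base case), not the exponent $-\kappa_i/\kappa_{i+n}$ printed in \eqref{eq:series-long}; your telescoping derivation is precisely the kind of independent cross-check that surfaces such discrepancies.
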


\subsection{Parallel path reductions}
After performing the proposed series link reductions, the reduced attack graph can contain parallel paths of the form $\{(v_i \rightarrow v_{i+1} \rightarrow v_j), (v_i \rightarrow v_{i+2} \rightarrow v_j), \ldots, (v_i \rightarrow v_{i+n} \rightarrow v_{t})\}$, where $Pre(v_{i+j})=\{v_i\}$ and $Post(v_{i+j})=\{v_t\}$, for all $j \in I := \{1, 2, 3 \ldots, n\}$. In this section, we identify scenarios under which parallel paths of this form can be replaced by a single equivalent node $v_{eq}$.  

Similar to the series reduction case, we first identify cases in which we can determine, a priori, if one or more of the parallel nodes should receive zero investment under the optimal investment strategy, and can therefore remove them prior to solving for the optimal investment profile. 

\begin{lemma}[Parallel zero investments]\label{lemma: non-zero parallel}
{Consider a set of parallel paths $\{( v_i \rightarrow v_{i+1} \rightarrow v_g), ( v_i \rightarrow v_{i+2} \rightarrow v_g), \ldots, ( v_i \rightarrow v_{i+n} \rightarrow v_g)\}$ with $L_{i+1} < L_{i+2} < \cdots < L_{i+n}$. Let $\kappa_{par} := \sum_{k \in I} \frac{1}{\kappa_k}$. Then, \rev{$x^*_{i+1} = 0$} if and only if
\begin{equation}
\label{eq: parallel_loss_condition}
    L_i \geq \Big(\frac{1 - \kappa_i \kappa_{par}}{\kappa_i \kappa_{par}}\Big)(L_{i+1} + L_g)
\end{equation}}
\end{lemma}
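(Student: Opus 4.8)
The plan is to reduce the minmax over the parallel gadget to a one-dimensional threshold comparison and then read off the condition from the KKT system. First I would note that, since the branches $v_i \to v_{i+k} \to v_g$ share the common node $v_i$ and the target $v_g$, the expected loss along branch $k$ can be written, after dropping the common prefix factor reaching $v_i$ and absorbing the constant factor $p_g$ of the unprotected target, as $p_i\big(L_i + p_{i+k}(L_{i+k}+L_g)\big)$ with $p_m = e^{-\kappa_m x_m}$. Because the $L_i$ term is common to all branches, a worst-case attacker selects the branch maximizing the ``tail'' $p_{i+k}(L_{i+k}+L_g)$, so \eqref{eq: atkOptim} restricted to this gadget becomes $\min\, p_i\big(L_i + \max_k p_{i+k}(L_{i+k}+L_g)\big)$ subject to the budget.

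Next I would make this smooth by introducing an epigraph variable $t$ with constraints $p_{i+k}(L_{i+k}+L_g)\le t$, minimizing $p_i(L_i+t)$, and writing the KKT conditions with budget multiplier $\lambda$ and constraint multipliers $\mu_k$. Stationarity in $t$ gives $\sum_k \mu_k = p_i$; stationarity in each active $x_{i+k}$ (whose constraint is tight, so $p_{i+k}(L_{i+k}+L_g)=t$) gives $\mu_k = \lambda/(\kappa_{i+k} t)$; and stationarity in $x_i>0$ gives $\lambda = \kappa_i p_i (L_i+t)$. Summing the active $\mu_k$ and using $\sum_k\mu_k=p_i$ yields $\lambda = p_i t / \kappa_{par}^S$ with $\kappa_{par}^S := \sum_{k\in S}1/\kappa_{i+k}$ over the active set $S$; equating this with $\kappa_i p_i(L_i+t)$ and cancelling $p_i$ gives the key identity $t^* = \frac{\kappa_i \kappa_{par}^S L_i}{1-\kappa_i \kappa_{par}^S}$. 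The structural point I would stress is that this equalized tail level is pinned independently of $B$: under a sufficient budget (Lemma~\ref{lemma: sufficient-budget}) the parallel investments $x_{i+k}=\frac{1}{\kappa_{i+k}}\ln\frac{L_{i+k}+L_g}{t^*}$ are fixed and any extra budget is absorbed by $x_i$.

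To finish I would invoke the ordering $L_{i+1}<\cdots<L_{i+n}$: since $v_{i+1}$ carries the smallest tail loss at zero investment, it is the first branch to be dropped from the active set. Hypothesizing all branches active ($S=\{1,\dots,n\}$, so $\kappa_{par}^S=\kappa_{par}$), the candidate solution is feasible iff $x_{i+1}=\frac{1}{\kappa_{i+1}}\ln\frac{L_{i+1}+L_g}{t^*}\ge 0$, i.e. iff $L_{i+1}+L_g\ge t^*$; otherwise $x^*_{i+1}=0$ is forced. Hence $x^*_{i+1}=0$ iff $L_{i+1}+L_g\le t^* = \frac{\kappa_i\kappa_{par}L_i}{1-\kappa_i\kappa_{par}}$, which rearranges (for $1-\kappa_i\kappa_{par}>0$) to exactly \eqref{eq: parallel_loss_condition}.

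The step I expect to be the main obstacle is the careful treatment of the denominator $1-\kappa_i\kappa_{par}$ in tandem with the active set. When $\kappa_i\kappa_{par}\ge 1$ the interior identity for $t^*$ degenerates (it would be non-positive), signalling that no configuration with $v_{i+1}$ active and $x_i>0$ is optimal; I would argue separately that in this regime the marginal value of protecting $v_i$ dominates, forcing $x^*_{i+1}=0$, which is consistent since the right-hand side of \eqref{eq: parallel_loss_condition} is then non-positive and the inequality holds trivially. I would also need to justify that at optimum the active branches are exactly equalized and dropped in order of increasing $L_{i+k}$ (a water-filling argument using strict monotonicity of $p_{i+k}(L_{i+k}+L_g)$ in $x_{i+k}$), and that $x_i>0$ holds under sufficient budget so that the $x_i$-stationarity equation, rather than a boundary condition, applies.
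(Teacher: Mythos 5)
Your proposal is correct, and it lands on exactly the paper's threshold---your feasibility condition $L_{i+1}+L_g \ge t^* = \kappa_i\kappa_{par}L_i/(1-\kappa_i\kappa_{par})$ rearranges precisely to \eqref{eq: parallel_loss_condition}---but it travels a genuinely different route. The paper first proves, by an exchange/contradiction argument, that the optimum equalizes expected losses across all parallel paths; it then uses the budget identity plus equalization to eliminate every variable except the one parallel investment, shows the resulting one-dimensional loss is convex, and reads the zero-investment condition off the sign of its derivative at zero; finally it argues, again by contradiction, that the node receiving zero investment is the one with smallest stand-alone loss. Your epigraph/KKT treatment obtains equalization for free from complementary slackness (a branch with positive investment must have a tight tail constraint, else budget could be rerouted to $x_i$), and the multiplier bookkeeping yields the closed-form equalized level $t^*$ together with the branch investments $x_{i+k}=\frac{1}{\kappa_{i+k}}\ln\frac{L_{i+k}+L_g}{t^*}$. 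This buys three things the paper's calculation leaves implicit: the independence of the branch investments from $B$ (which is what Lemma~\ref{lemma: sufficient-budget} and the investment-mapping in the appendix rely on), a formula valid for an arbitrary active set $S$ (useful when the lemma is applied repeatedly), and a clean treatment of the degenerate regime $\kappa_i\kappa_{par}\ge 1$, where the identity $t^*(1-\kappa_i\kappa_{par})=\kappa_i\kappa_{par}L_i$ is immediately contradictory if any branch is active. In exchange you owe two obligations that the paper's route discharges automatically: (i) the epigraph objective $e^{-\kappa_i x_i}(L_i+t)$ is not jointly convex, so KKT is only necessary, and for the ``only if'' direction you must invoke the uniqueness of the minimizer asserted after \eqref{eq: atkOptim} (or the paper's explicit second-derivative computation) to conclude that your all-active KKT point is indeed the optimum when \eqref{eq: parallel_loss_condition} fails; (ii) the ordering claim you flag is real but closes in one line: if $x^*_{i+1}>0$ then its tail constraint is tight, so $t=e^{-\kappa_{i+1}x^*_{i+1}}(L_{i+1}+L_g)<L_{i+1}+L_g<L_j+L_g$ for every $j>i+1$, which forces every higher-loss branch to receive positive investment as well; hence $x^*_{i+1}>0$ implies the full active set, and the contrapositive is exactly the ``if'' direction of the lemma.
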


Intuitively, the above lemma can be interpreted as follows. It may arise that due to the security attributes of the parallel nodes (which follow (\ref{eq: parallel_loss_condition})), the optimal action is to equate the losses across all paths to that of $\mathbf{L}_l = L_i + L_l + L_g$ where $v_l$ is the loss the parallel node with the least stand-alone loss $L_l$. 

{This lemma can be applied repeatedly: with $v_{i+1}$ receiving no investment, we can check Lemma \ref{lemma: non-zero parallel} over the remaining nodes until no additional reductions of this type are possible.} {Note also that in the special case when $\kappa_i = \kappa, \forall i$, through repeated application of Lemma~\ref{lemma: non-zero parallel}, the set of paths will be replaced with the single path containing the parallel node with the highest stand-alone loss; this matches our earlier results in the homogeneous $\kappa$ model  \cite{oruganti2021impact}.}

Following repeated application of Lemma~\ref{lemma: non-zero parallel}, all remaining parallel nodes in sets of the form $\{( v_i \rightarrow v_{i+1} \rightarrow v_g ), ( v_i \rightarrow v_{i+2} \rightarrow v_g ), \ldots, ( v_i \rightarrow v_{i+n} \rightarrow v_g )\}$ will receive non-zero investments. These sets can be further replaced by a single equivalent node, as shown in the following lemma. 

\begin{lemma}[Parallel reduction]
\label{lemma: parallel}
Consider a set of parallel paths, $\{( v_i \rightarrow v_{i+1} \rightarrow v_g ), ( v_i \rightarrow v_{i+2} \rightarrow v_g ), \ldots, ( v_i \rightarrow v_{i+n} \rightarrow v_g )\}$ with $\kappa_{par} = \sum_{r=i+1}^{i+n}\frac{1}{\kappa_r}$ such that the conditions of Lemma \ref{lemma: non-zero parallel} are not satisfied. Then $\mathcal{G}$ can be replaced with a single equivalent node $v_{eq}$ with $k_{eq} = k_i$ and
\begin{equation}
L_{eq} = \frac{L_i}{1-\kappa_i \kappa_{par}}\prod_{j=i+1}^{i+n}\Big(\frac{\kappa_i\kappa_{par}}{1-\kappa_i\kappa_{par}}\Big( \frac{L_i}{L_{j}+L_g}\Big)\Big)^{\frac{-\kappa_i}{\kappa_{j}}}
\end{equation}
\end{lemma}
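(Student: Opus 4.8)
The plan is to show that, in the regime where every parallel node carries positive investment, the minimal internal expected loss of the substructure $v_i \to \{v_{i+1},\dots,v_{i+n}\} \to v_g$ equals $L_{eq}\,e^{-\kappa_i X}$ whenever a total budget $X$ is spent inside it, with $L_{eq}$ the claimed expression. Since this is exactly the loss a single node with sensitivity $\kappa_{eq}=\kappa_i$ and stand-alone loss $L_{eq}$ would incur under investment $X$, the substructure and $v_{eq}$ respond identically to any budget allocated to them, so they are interchangeable inside any larger graph and equivalence in the sense of Definition~\ref{def: equivalent} follows.

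First I would write the loss along the $r$-th path. With $p_i^0$ dropped and $x_g^*=0$, the loss on $v_i \to v_{i+r} \to v_g$ is $e^{-\kappa_i x_i}\big(L_i + e^{-\kappa_{i+r}x_{i+r}}(L_{i+r}+L_g)\big)$. Writing $y=e^{-\kappa_i x_i}$, the defender's inner objective is $y\big(L_i + \max_r e^{-\kappa_{i+r}x_{i+r}}(L_{i+r}+L_g)\big)$. Because all parallel nodes are assumed to receive positive investment (i.e., the condition of Lemma~\ref{lemma: non-zero parallel} fails), a standard minimax/equalization argument forces the differentiating terms $e^{-\kappa_{i+r}x_{i+r}}(L_{i+r}+L_g)$ to a common value $C$; otherwise budget could be shifted off an over-protected path to lower the maximum. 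This collapses the $n$ path constraints into $x_{i+r}=\tfrac{1}{\kappa_{i+r}}\ln\frac{L_{i+r}+L_g}{C}$ and reduces the internal problem to minimizing $y(L_i+C)$ over the two scalars $y,C$ subject to $x_i+\sum_r x_{i+r}=X$.

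Next I would eliminate $y$ through the budget constraint, obtaining $y=e^{-\kappa_i X}\prod_r (L_{i+r}+L_g)^{\kappa_i/\kappa_{i+r}}\,C^{-\kappa_i\kappa_{par}}$, so the internal loss is $e^{-\kappa_i X}\prod_r(L_{i+r}+L_g)^{\kappa_i/\kappa_{i+r}}\,C^{-\kappa_i\kappa_{par}}(L_i+C)$. The only free variable is $C$; minimizing $g(C)=C^{-\kappa_i\kappa_{par}}(L_i+C)$ via $g'(C)=0$ gives $C^*=\frac{\kappa_i\kappa_{par}}{1-\kappa_i\kappa_{par}}L_i$, whose positivity uses $\kappa_i\kappa_{par}<1$, precisely what the failure of Lemma~\ref{lemma: non-zero parallel} guarantees. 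Substituting $C^*$, using $L_i+C^*=\frac{L_i}{1-\kappa_i\kappa_{par}}$ and folding $(C^*)^{-\kappa_i\kappa_{par}}=\prod_r (C^*)^{-\kappa_i/\kappa_{i+r}}$ into the product, the internal loss becomes $e^{-\kappa_i X}\,\frac{L_i}{1-\kappa_i\kappa_{par}}\prod_r\big(\frac{1-\kappa_i\kappa_{par}}{\kappa_i\kappa_{par}}\frac{L_{i+r}+L_g}{L_i}\big)^{\kappa_i/\kappa_{i+r}}$, which is $e^{-\kappa_i X}$ times exactly the claimed $L_{eq}$ once each factor is rewritten with a negative exponent.

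The algebra of this last substitution is routine. The main obstacle I anticipate is justifying the equalization step rigorously: I must argue that at the optimum all $n$ parallel paths are simultaneously binding and that their differentiating terms coincide, which requires each parallel node to carry strictly positive investment; this is where the hypothesis (non-applicability of Lemma~\ref{lemma: non-zero parallel}) is essential, and it simultaneously secures $\kappa_i\kappa_{par}<1$ so that $C^*$ and $L_{eq}$ are positive. A secondary point worth stating explicitly is that matching the entire loss-versus-budget curve $L_{eq}e^{-\kappa_i X}$ — not merely its value at one budget level — is what makes the replacement valid regardless of where the substructure sits within a larger graph, thereby completing the equivalence claim.
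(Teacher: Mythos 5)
Your proposal is correct and follows essentially the same route as the paper's proof: both exploit the equalization of expected losses across the parallel paths (guaranteed to involve strictly positive investments because the condition of Lemma~\ref{lemma: non-zero parallel} fails, which also forces $\kappa_i\kappa_{par}<1$), eliminate the root investment via the budget constraint, and solve a single-variable first-order condition to obtain the optimal loss in the form $L_{eq}e^{-\kappa_i T}$. The only differences are cosmetic — you parametrize by the common path value $C$ rather than by one parallel node's investment as the paper does, and you state explicitly (a nice touch) that matching the entire loss-versus-budget curve is what makes the substitution valid inside any larger graph.
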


It is to be noted that the outcome of these steps can lead to parallel paths being reduced to series links. In such a scenario, further reduction of the attack graph can be achieved by looping between the procedures in Lemmas~\ref{lemma:series_zero_investments}-\ref{lemma: parallel}, until no further series or parallel reduction is possible.
\subsection{Input node reductions}
\label{sec: additional input} 

Finally, we look at a possible reduction of multiple input nodes. Similar to the previous sections, we begin by providing a condition on input nodes given which one can a priori guarantee that their first successor node, $v_t$, will receives a zero investment in the optimal investment profile. 

\begin{lemma}\label{lemma:zero destination node} Consider a set of paths $\mathcal{G} = \{(v_{i} \rightarrow v_t \rightarrow v_g), (v_{i+1} \rightarrow v_{t} \rightarrow v_g), \cdots, (v_{i+n} \rightarrow v_t \rightarrow v_g)\}$ with $L_j = L, ~\forall j \in \mathcal{I} = \{i, i+1, i+2, \ldots, i+n\}$. Let $\kappa_{par} = \sum_{r=i}^{i+n}\frac{1}{\kappa_r}$.  Then, \rev{$x^*_t = 0$} if and only if $\kappa_t\kappa_{par} \leq 1$ or $L \geq ( 1-\kappa_t\kappa_{par})(L_t+L_g)$. 
\end{lemma}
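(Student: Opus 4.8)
The plan is to collapse the defender's $(|\mathcal{I}|+1)$-dimensional minmax problem on this merge gadget into a one-dimensional minimization in the single variable $x_t$, and then read off the condition for $x_t^*=0$ from a boundary first-order condition. First I would write the loss on each path $P_j=(v_j\to v_t\to v_g)$. Since $L_j=L$ for every input and $x_g^*=0$, the loss collects as $\Phi_j = e^{-\kappa_j x_j}\big(L + (L_t+L_g)e^{-\kappa_t x_t}\big)$, where the bracketed factor $A(x_t):=L+(L_t+L_g)e^{-\kappa_t x_t}$ is \emph{common} to all paths (it depends only on the shared node $v_t$), while $e^{-\kappa_j x_j}$ is path-specific. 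Hence $\max_j \Phi_j = A(x_t)\,\max_j e^{-\kappa_j x_j}$, so the attacker couples to the input investments only through $\max_j e^{-\kappa_j x_j}$.

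Next I would solve the inner allocation over the inputs for a fixed $x_t$. Spending the remaining budget $B-x_t$ to minimize $\max_j e^{-\kappa_j x_j}$ is a standard equalization (water-filling) argument: at the optimum every input is active and the attacker is indifferent across them, giving $e^{-\kappa_j x_j}=z/A$ for a common equilibrium value $z$ and $x_j=\tfrac{1}{\kappa_j}\ln(A/z)\ge 0$. Summing, $\sum_j x_j = \kappa_{par}\ln(A/z)=B-x_t$, so the equilibrium value as a function of the single remaining variable is $z(x_t)=A(x_t)\,e^{-(B-x_t)/\kappa_{par}}$. Minimizing $z$ over $x_t\ge 0$ is therefore equivalent to minimizing $f(x_t):=A(x_t)\,e^{x_t/\kappa_{par}}$, since $e^{-B/\kappa_{par}}$ is a positive constant. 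This is where the budget-sufficiency assumption (Lemma~\ref{lemma: sufficient-budget}) is used, guaranteeing the budget binds and all inputs can be made active.

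The decisive step is the boundary analysis of $f$. Writing $\alpha:=1/\kappa_{par}$ and $c:=L_t+L_g$, I would compute $f'(x_t)=e^{\alpha x_t}\big(\alpha L + c(\alpha-\kappa_t)e^{-\kappa_t x_t}\big)$, whose sign is governed by $g(x_t):=\alpha L + c(\alpha-\kappa_t)e^{-\kappa_t x_t}$. The claim $x_t^*=0$ holds exactly when $x_t=0$ minimizes $f$ on $[0,\infty)$, i.e. when $g(0)\ge 0$ \emph{and} no interior minimum exists. I would rule out an interior minimum by monotonicity of $g$, separating the two regimes $\kappa_t\kappa_{par}\le 1$ and $\kappa_t\kappa_{par}>1$. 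When $\kappa_t\kappa_{par}\le 1$ (so $\alpha\ge\kappa_t$) the second term of $g$ is nonnegative, hence $g>0$ throughout, $f$ is strictly increasing, and $x_t^*=0$ holds unconditionally. When $\kappa_t\kappa_{par}>1$ (so $\alpha<\kappa_t$) the term $c(\alpha-\kappa_t)e^{-\kappa_t x_t}$ is negative and strictly increasing toward $0$, so $g$ is strictly increasing and its sign over the whole domain is fixed by $g(0)=\alpha(L+c)-c\kappa_t$. Thus in this regime $x_t^*=0$ iff $g(0)\ge 0$; rearranging $\alpha(L+L_t+L_g)\ge (L_t+L_g)\kappa_t$ into a threshold on $L$ expressed through $\kappa_t\kappa_{par}$ and $(L_t+L_g)$ yields the stated condition. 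Conversely, if $g(0)<0$ then $f'(0)<0$ and, since $f\to\infty$ as $x_t\to\infty$, there is a unique interior minimizer with $x_t^*>0$, establishing the ``only if'' direction.

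I expect two genuine obstacles. The first is rigorously justifying the inner equalization and the reduction to a single variable: specifically, that leaving any input at zero investment is dominated, because it pins that path's value to the maximal $A(x_t)$ and wastes all spending on the other inputs, so every input must be active and their losses are equated. The second is the global-versus-local subtlety in the last paragraph, namely showing that $g(0)\ge 0$ is \emph{sufficient} (not merely necessary) for the boundary to be the global minimizer; this is exactly what the strict monotonicity of $g$ delivers once the two sign regimes $\kappa_t\kappa_{par}\le 1$ and $\kappa_t\kappa_{par}>1$ are treated separately.
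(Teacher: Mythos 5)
Your proposal is correct and follows essentially the same route as the paper's own proof: equalize the expected losses across the input paths to collapse the problem to a single variable $x_t$, then decide the boundary case from the sign of the derivative at $x_t=0$ together with convexity (your monotonicity-of-$g$ argument is equivalent). One remark: your algebra yields the threshold $L \geq (\kappa_t\kappa_{par}-1)(L_t+L_g)$, which does not literally match the lemma's factor $(1-\kappa_t\kappa_{par})$ — but yours is the correct form; the stated lemma (and the paper's own proof, which writes $L \leq (\kappa_t\kappa_{par}-1)L_t$) carries sign/direction typos, as one can confirm from the paper's intuitive discussion (perimeter defense is optimal when the inner node's loss is relatively \emph{low}) and from the numerical SCADA example, where $L_{12}\ll(\kappa_5\kappa_{12}-1)L_{56}$ and indeed $x^*_5\neq 0$.
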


Similar to the result in \cite{oruganti2021impact}, this lemma states that the defender is better off choosing a ``perimeter defense'' if the stand-alone loss $v_t$ is substantially lower than the input nodes; it further extends that result by showing that the same is true if the inner node $v_t$ has a relatively lower return-on-investment. 

We now look at the case when the conditions of Lemma \ref{lemma:zero destination node} are not satisfied (i.e., $x^*_t \neq 0$), and show that multiple input nodes can be replaced by a single equivalent node. 

\begin{lemma}[Source node reduction]
\label{lemma: inputs}
Consider a set of input paths $\{(v_{i} \rightarrow v_{t}), (v_{i+1} \rightarrow v_{t}), \ldots, (v_{i+n} \rightarrow v_t) \}$ with equal stand-alone loss input nodes, i.e., $L_{i+1} = \cdots = L_{i+n} = L$. Assume the conditions of Lemma~\ref{lemma:zero destination node} are not met. Then, this set can be replaced with a single equivalent node $v_{eq}$ such that
\begin{equation}
    L_{eq} = \frac{L\kappa_i\kappa_{par}}{\kappa_i\kappa_{par}-1}\Big(\frac{1}{\kappa_i\kappa_{par}-1}\Big(\frac{L}{L_t}\Big)\Big)^{\frac{-1}{\kappa_i\kappa_{par}}}
\end{equation}
where $k_{par} = \sum_{r=i}^{i+n}\frac{1}{k_{r}}$ and $k_{eq} = \frac{1}{k_{par}}$.
\end{lemma}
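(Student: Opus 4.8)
The plan is to analyze the defender's min-max problem restricted to the input sub-structure and reduce it to a two-level allocation, first over the parallel input nodes and then between this merged input and the common successor $v_t$. Writing the expected loss of the attack path through input node $v_r$ as $e^{-\kappa_r x_r}\big(L + e^{-\kappa_t x_t}L_t\big)$ (subsuming the default probabilities into the losses and folding the loss downstream of $v_t$ into $L_t$), I would exploit that all input nodes share the same stand-alone loss $L$ and the same successor, so the bracketed factor is common to every path. Hence, for any fixed $x_t$ and any fixed total input budget $X_{in}=\sum_r x_r$, the inner problem $\min_{\{x_r\}}\max_r e^{-\kappa_r x_r}$ subject to $\sum_r x_r = X_{in}$ is a pure min-max over the input layer alone.

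First I would solve this inner problem: since shifting budget toward whichever node currently attains the maximum strictly lowers that maximum, the optimum equalizes $\kappa_r x_r$ across $r$. Setting $e^{-\kappa_r x_r}=\alpha$ for all $r$ gives $x_r = -\ln\alpha/\kappa_r$ and $X_{in} = -\kappa_{par}\ln\alpha$, so $\alpha = e^{-X_{in}/\kappa_{par}}$. This is precisely the survival probability of a single node of sensitivity $1/\kappa_{par}$ receiving investment $X_{in}$, which establishes the claim $\kappa_{eq}=1/\kappa_{par}$ and collapses the input layer to one effective node. The remaining problem then has two decision variables, $X_{in}$ and $x_t$, tied by $X_{in}+x_t=B$.

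Next I would pin down the optimal split. Because the framework guarantees the objective is strictly convex in the investments, the KKT stationarity conditions are sufficient for the global optimum. Applying them to the minimization of $\alpha\big(L + \beta L_t\big)$ with $\alpha=e^{-X_{in}/\kappa_{par}}$, $\beta=e^{-\kappa_t x_t}$, and the budget constraint, I would equate the marginal loss reductions per unit investment at the merged input and at $v_t$, obtaining the scale-invariant relation $\beta^{\ast} = \frac{L}{(\kappa_t\kappa_{par}-1)L_t}$. The hypothesis $x_t^{\ast}\neq 0$, i.e.\ the negation of the condition in Lemma~\ref{lemma:zero destination node}, is exactly what forces $\kappa_t\kappa_{par}>1$ and $\beta^{\ast}\in(0,1)$, so the stationary point is interior and feasible under the sufficient-budget assumption (Lemma~\ref{lemma: sufficient-budget}). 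Substituting $\beta^{\ast}$ back yields $L + \beta^{\ast} L_t = L\,\frac{\kappa_t\kappa_{par}}{\kappa_t\kappa_{par}-1}$ together with $\alpha^{\ast} = e^{-B/\kappa_{par}}(\beta^{\ast})^{-1/(\kappa_t\kappa_{par})}$, so the optimal expected loss takes the form $e^{-B/\kappa_{par}}L_{eq}$ with $L_{eq}$ equal to the stated closed form. Reading off $L_{eq}$ and invoking Definition~\ref{def: equivalent} completes the reduction.

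The main obstacle I anticipate is the coupled back-substitution: verifying that the two-variable stationary point is genuinely the constrained optimum rather than merely a critical point, confirming that it respects $x_r,x_t\ge 0$ throughout the sufficient-budget regime, and carrying the algebra through so that the exact exponent $-1/(\kappa_t\kappa_{par})$ and the prefactor in $L_{eq}$ emerge. The equalization argument and the convexity-based sufficiency of the KKT conditions are the conceptual levers that render the rest routine.
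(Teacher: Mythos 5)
Your proof is correct and follows essentially the same route as the paper: the paper's own proof simply states that this lemma follows the same steps as the parallel reduction (Lemma~\ref{lemma: parallel}), namely equalizing expected losses across the parallel input paths (which, with equal stand-alone losses $L$, means equalizing $\kappa_r x_r$ and collapsing the input layer into one node with $\kappa_{eq}=1/\kappa_{par}$, just as in the setup of Lemma~\ref{lemma:zero destination node}), and then solving the resulting single-variable convex problem for the split with $v_t$ via the KKT conditions --- exactly your two-level decomposition. One remark: your back-substitution correctly produces $\kappa_t\kappa_{par}$ in the prefactor and exponent, whereas the lemma as printed has $\kappa_i\kappa_{par}$; the paper's Appendix formula $x^*_t=\frac{-1}{\kappa_t}\log\bigl(\frac{L}{L_t(\kappa_t\kappa_{par}-1)}\bigr)$ confirms that $\kappa_t$ is the intended quantity, so your version is the right one and the printed statement contains a typo rather than your derivation an error.
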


\subsection{Reduction algorithm}

\rev{We now present our proposed attack graph reduction procedure in Algorithm~\ref{alg: reduction}. The statement and proof of Proposition~\ref{prop:reduction-alg} are based on the sequence of Lemmas \ref{lemma:series_zero_investments}-\ref{lemma: inputs} presented earlier. This proposition generalizes our earlier work \cite{oruganti2021impact} as well as related results in prior works \cite{abdallah2020behavioral, abdallah2021morshed}.}

\begin{proposition}\label{prop:reduction-alg}
Given a sufficient budget $B$, Algorithm~\ref{alg: reduction} leads to an equivalent reduced form $\mathcal{G}_r$ of attack graph $\mathcal{G}$. 
\end{proposition}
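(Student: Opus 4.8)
The plan is to prove the proposition by induction on the number of reduction steps performed by Algorithm~\ref{alg: reduction}, using the transitivity of the equivalence relation $\equiv$ from Definition~\ref{def: equivalent}. The backbone of the argument is that each iteration of the algorithm applies exactly one of the subroutines encoded in Lemmas~\ref{lemma:series_zero_investments}--\ref{lemma: inputs}, and each such subroutine has already been shown in the appendix to replace a series, parallel, or input-node substructure with a single equivalent node while preserving the overall optimal expected loss $\mathbf{L}^*$. Since $\equiv$ is reflexive, symmetric, and transitive, a finite chain of equivalence-preserving single-step reductions $\mathcal{G} \equiv \mathcal{G}_1 \equiv \cdots \equiv \mathcal{G}_r$ yields $\mathcal{G} \equiv \mathcal{G}_r$. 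The base case is the trivial reduction $\mathcal{G} \equiv \mathcal{G}$; the inductive step appeals to whichever lemma corresponds to the substructure the algorithm selects.

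First I would make explicit a \emph{locality} claim that is implicit in Lemmas~\ref{lemma:series_zero_investments}--\ref{lemma: inputs}: each lemma is stated for an isolated series, parallel, or input configuration, yet the algorithm applies it to such a configuration embedded inside the larger graph. I would argue that because the objective in \eqref{eq: atkOptim} is a maximum over paths of multiplicative terms and the budget constraint is separable across nodes, replacing a substructure $\mathcal{S}$ by its equivalent node $v_{eq}$ leaves unchanged the contribution of every path that does not traverse $\mathcal{S}$, while for paths that do traverse $\mathcal{S}$ the equivalent node reproduces, by the construction of $L_{eq}$ and $\kappa_{eq}$, the same worst-case loss-versus-budget trade-off. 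The sufficient-budget assumption of Lemma~\ref{lemma: sufficient-budget} is essential here: it guarantees that the a priori zero-investment identifications in Lemmas~\ref{lemma:series_zero_investments}, \ref{lemma: non-zero parallel}, and \ref{lemma:zero destination node} remain valid after embedding, so the nodes slated for $x^*=0$ genuinely receive no investment in the global optimum.

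Next I would address termination and the fixed-point structure of the loop. I would introduce a complexity measure---for concreteness, the node count $|\mathcal{V}|$---and observe that every subroutine strictly decreases it, since each replaces at least two nodes by one. Because $\mathcal{G}$ is finite and acyclic, only finitely many reductions can occur, so the loop between the series and parallel procedures (which, as noted, can create fresh series links out of reduced parallel paths) must terminate. Upon termination, no series, parallel, or input configuration admitting a reduction remains, so $\mathcal{G}_r$ is a genuine reduced form; combined with the equivalence chain above, this establishes $\mathcal{G} \equiv \mathcal{G}_r$.

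The hard part will be the locality and compositionality argument of the second paragraph: the individual lemmas are proved for small standalone graphs, and one must verify that substituting the equivalent node inside an arbitrary surrounding topology changes neither which global attack path is worst-case nor the optimal split of budget between the substructure and the rest of the network. I expect this to reduce to showing that, at optimum, the defender's allocation within $\mathcal{S}$ depends on the remainder of the graph only through the residual loss that a compromise of $\mathcal{S}$ exposes downstream---that is, through $L_g$ and the aggregate losses along the enclosing path---which is exactly the dependence captured by the $L_{eq}$ formulas, and that this decoupling is precisely what the sufficient-budget assumption secures.
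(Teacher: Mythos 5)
Your proposal is correct and takes essentially the same route as the paper: the paper's (very terse) justification of Proposition~\ref{prop:reduction-alg} is precisely that it follows from chaining the equivalence-preserving subroutines of Lemmas~\ref{lemma:series_zero_investments}--\ref{lemma: inputs}, which is your induction plus transitivity of $\equiv$. The locality/compositionality point you rightly flag as the hard part is handled implicitly in the paper's lemma proofs by parametrizing each substructure with an arbitrary budget $T$ spent on it (so its contribution to the global optimum depends on the rest of the graph only through $T$ and the downstream losses, matching your decoupling argument), and your termination argument via strictly decreasing node count, though left unstated in the paper, is straightforward and correct.
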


\begin{algorithm}[!h]
\caption{Reduction of attack graph $\mathcal{G}$ to an equivalent  $\mathcal{G}_r$} 
\label{alg: reduction}
		\begin{algorithmic}[1]
			\State Input: An attack graph $\mathcal{G}$
			\State Output: An equivalent reduced attack graph $\mathcal{G}_r$
    			\While {series or parallel paths reducible}
    			\State \textbf{\textit{Series Paths Reduction Step}:}
    			\State Gather all series paths in graph $\mathcal{G}$
    			\State Lemma~\ref{lemma:series_zero_investments}: remove series nodes with no investment
    			\State Lemma~\ref{lemma:series-reduction}: reduce remaining series links to one node
    			\State Update the set of series paths in $\mathcal{G}$ accordingly
    			\State \textbf{\textit{Parallel Paths Reduction Step}:}
    			\State Gather all parallel paths in graph $\mathcal{G}$
    			\State Lemma~\ref{lemma: non-zero parallel}: remove (some) parallel paths
    			\State Lemma~\ref{lemma: parallel}: reduce (some) parallel paths to one node
    			\State Update the set of parallel paths in $\mathcal{G}$ accordingly
    			\EndWhile
    			\State \textbf{\textit{Input Node Reduction Step}:}
    			\State Consider all input nodes in graph $\mathcal{G}$
    			\State Apply Lemma \ref{lemma:zero destination node} if possible, else Lemma \ref{lemma: inputs}, to remove (some) of the input nodes
    			\State Update the set of input nodes in $\mathcal{G}$ accordingly
			\State \Return reduced graph $\mathcal{G}_r$
		\end{algorithmic}
\end{algorithm}

\rev{In Appendix~\ref{app: optimal_investments}, we further detail how the optimal investments obtained from the reduced graph $\mathcal{G}_r$ can be mapped back to the optimal investments on the original graph $\mathcal{G}$.}

\section{Network Design Interventions}
\label{sec: formations}

While the attacker-defender games of Section~\ref{sec: framework} have been studied in a number of prior works in the homogeneous return-on-investment case {(e.g., \cite{abdallah2020behavioral, hota2018game})}, their focus, similar to the analysis presented in Section~\ref{sec: reductions}, has been on the study of the optimal investment strategy \emph{given} a fixed network. In addition to extending these models by considering heterogeneous return-on-investments $\kappa$, this paper further evaluates the use of an orthogonal set of defender actions, in the form of \emph{network design interventions}. 

To illustrate the main ideas, we consider a minimal base network and four re-design actions as illustrated in Fig.~\ref{fig: network redesign}: (a) adding a node in series; (b) adding a node in parallel; (c) a combination of series and parallel additions; and (d) adding an new entry node. We compare the overall loss on the networks obtained through these actions against those of the base network, and provide (intuitive) interpretations for the potential effects of each type of intervention. 

These four types of interventions can be made in any general CPS. As mentioned earlier, our reduction approach in Section~\ref{sec: reductions} can be used to simplify the task of comparing the expected losses following these interventions. {In Section~\ref{sec: application}, we will elaborate on the effect of these interventions in more general networks, {and show that they match the intuitions obtained from the analysis of the base network in this section}, using numerical examples motivated by applications in industrial cyber-physical systems.}  

\begin{figure}[t]
    \centering
    \includegraphics[scale=0.2]{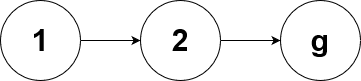}
    \caption{A minimal base network.}
    \label{fig: base}
    \vspace{0.1in}
    \captionsetup{justification=centering}
    \includegraphics[width=0.7\columnwidth]{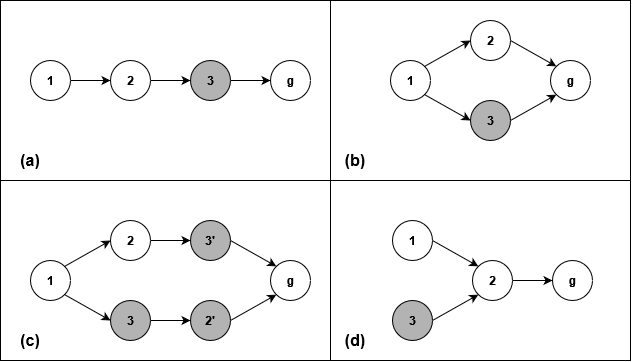}
    \caption{Network design intervention alternatives.}
    \label{fig: network redesign}
\end{figure}

\subsubsection*{The Base network} Consider the minimal attack graph  
shown in Fig.~\ref{fig: base}. This attack graph is minimal in the sense that the target node $v_g$ is an interior node of the network, accessible only through a stepping-stone attack by compromising the entry node ($v_1$) as well as an intermediate node ($v_2$). 

For simplicity, we let $p_i^0=p,~ \forall i$ for the analytical results in this section; in the numerical illustrations, we additionally set $p=1$ and highlight the impact of other problem parameters. We also assume the problem parameters are such that the conditions of Lemma \ref{lemma:series_zero_investments} are not met, so that both nodes $v_1$ and $v_2$ receive non-zero investments at equilibrium. The expected loss for the base network in this case is:
\begin{equation*}
\label{eq: base 2}
    \mathbf{L}^*_{\text{b}} = \frac{L_1\kappa_2p}{\kappa_2-\kappa_1}\Big(\frac{\kappa_1}{\kappa_2-\kappa_1}\Big(\frac{L_1p}{L_2p^2+L_gp^3}\Big) \Big)^{\frac{-\kappa_1}{\kappa_2}}e^{-k_1B} 
\end{equation*}
We will next assess which design interventions can help lower this expected loss. 

\subsubsection{Series connection: increased endurance}
\label{subsec: series}
The first intervention we consider is that of an addition of a node in series, as illustrated in Fig.~\ref{fig: network redesign}(a). Security interventions in the form of adding encryption devices or requiring (stronger) passwords can be represented as this type of network re-design. Intuitively, we might expect that the addition of a series node will increase the endurance of the system as the attacker now has to compromise an extra node to get to the target.

We consider two possibilities for such interventions: closer to the target node (strengthening the core of the network) vs. closer to the entry node (strengthening perimeter defenses). {In the former case when $v_3$ is added after $v_2$, if the conditions of Lemma \ref{lemma:series_zero_investments} apply on $v_3$, then $x^*_3=0$. The loss of the modified network then will be similar to $\mathbf{L}^*_{\text{b}}$, but with $L_2$ replaced by $L_2+L_3$ {and $L_g$ multiplied with $p^4$}. As $\mathbf{L}^*_{\text{b}}$ is increasing in $L_2$ (since $\frac{\partial \mathbf{L}^*_{\text{b}}}{\partial L_2} > 0$) this means that {while the series addition does reduce the probability of attack on the downstream node ($v_g$ here), there may arise scenarios were this may not offset the increase in loss of the node immediately upstream ($v_2$ here).} This means that, perhaps counter intuitively, attempts at ``strengthening'' the core of the network with components with 
lower return-on-investment or {higher} safety criticality ({higher} $L$) tends to backfire and {increase} the total loss. 

Next, we look at the case when the conditions of Lemma \ref{lemma:series_zero_investments} for $v_3$ are not met, which implies $x^*_3 \neq 0$. {Depending on whether $v_3$ is introduced before ($pre$) or after ($post$) $v_2$}, the equilibrium expected losses obtained are:} 
\begin{multline*}
    \mathbf{L}^*_{\text{srs, post}} = \frac{L_1\kappa_2p}{\kappa_2-\kappa_1}
    \Big(\frac{\kappa_1}{\kappa_3}\Big(\frac{\kappa_3-\kappa_2}{\kappa_2-\kappa_1}\Big)\frac{L_1p}{L_2p^2}\Big)^{\frac{-\kappa_1}{\kappa_2}} \\
    \Big(\Big(\frac{\kappa_2}{\kappa_3-\kappa_2}\Big)\frac{L_2p^2}{L_3p^3+L_gp^4}\Big)^{\frac{-\kappa_1}{\kappa_3}}e^{-k_1B}
\end{multline*}

\begin{multline*}
    \mathbf{L}^*_{\text{srs, pre}} = \frac{L_1\kappa_3p}{\kappa_3-\kappa_1}
    \Big(\frac{\kappa_1}{\kappa_2}\Big(\frac{\kappa_2-\kappa_3}{\kappa_3-\kappa_1}\Big)\frac{L_1p}{L_3p^2}\Big)^{\frac{-\kappa_1}{\kappa_3}} \\
    \Big(\Big(\frac{\kappa_3}{\kappa_2-\kappa_3}\Big)\frac{L_3p^2}{L_2p^3+L_gp^4}\Big)^{\frac{-\kappa_1}{\kappa_2}}e^{-k_1B}
\end{multline*}

We compare these expected losses against that of the base network numerically. \rev{In Fig.~\ref{fig: series_additional} we fix the values of $L_1, L_2$, and $L_g$ and 
compare the resulting loss $L^*_b$ (indicated using the red curve) against the losses $\mathbf{L}^*_{\text{srs, pre}}$ and $\mathbf{L}^*_{\text{srs, post}}$  
as a function of $L_3$ and $\kappa_3$ of the added node}. {First, we note that in both cases, the total expected loss increases with $L_3$, meaning that the added node should itself have a low stand-alone loss for the expected loss to decrease relative to the base network.} Further, it can be seen that for a $v_3$ added downstream (closer to the target) to lower the expected loss relative to the base network, it has to be a node with a relatively \emph{high} return-on-investment. Adding the same node upstream would lead to a decrease in expected loss at lower $\kappa_3$. 

\rev{\textit{Impacts of interventions on expected loss vs. attack probability on individual nodes:} We note that there may exist a trade-off between minimizing total expected loss and the probability of attack on a given node. Figure \ref{fig: pareto} illustrates a scenario where the attack probability on $v_2$ (given by $p^*_2 = p^0_1p^0_2e^{-\kappa_1x^*_1-\kappa_2x^*_2}$) changes depending on the security attributes of a node $v_3$ added immediately downstream. It can be seen that as $\kappa_3$ increases, higher security investments on $v_3$ lead to lower total expected losses, while the reduced investment on $v_2$ leads to a higher probability of attack on that node. That is, the designer opts to make interventions that decrease overall loss, despite the (negative) impacts it may have on some of the individual assets.}

{\textbf{In summary}, we observe that the addition of a series node $v_3$ can help strengthen the network if the added node has sufficiently low stand-alone loss $L_3$ and sufficiently high return-on-investment $\kappa_3$, with the benefits being higher if it is feasible to add the node closer to the perimeter of the network.} 

\begin{figure}
\centering
\includegraphics[width=\linewidth]{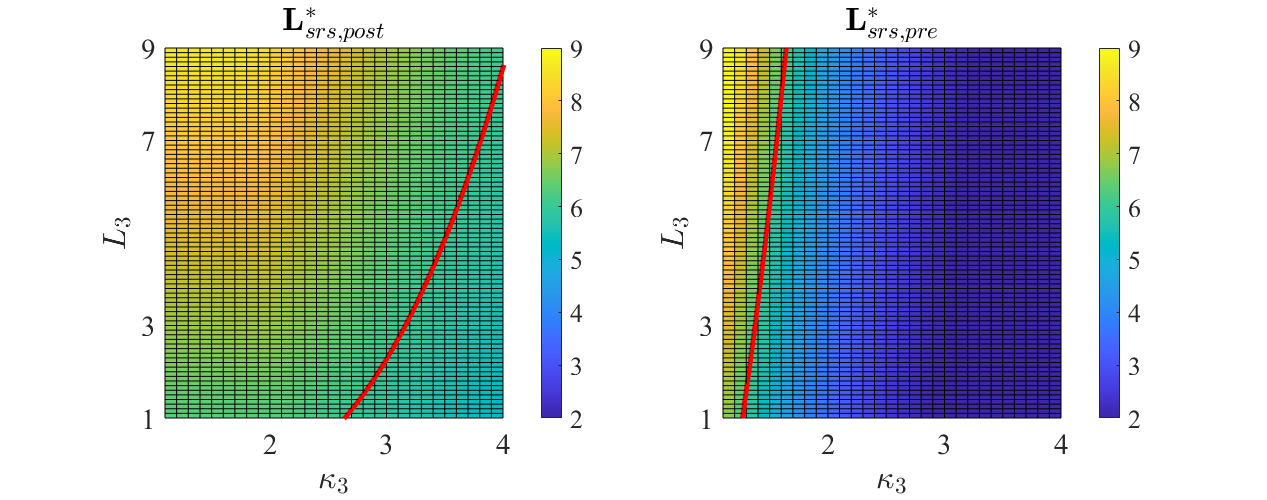}
  \caption{Adding $v_3$ after (left) and before (right) $v_2$ in the base network. \rev{The red line indicates the level curve of the base expected loss $\mathbf{L}^*_b = 6$. All values to the left of the curve indicate a higher loss compared to the base.}}
  \label{fig: series_additional}
\end{figure}

\begin{figure*}[]
\centering
\begin{subfigure}{0.5\columnwidth}
    \centering
    \includegraphics[scale=0.3]{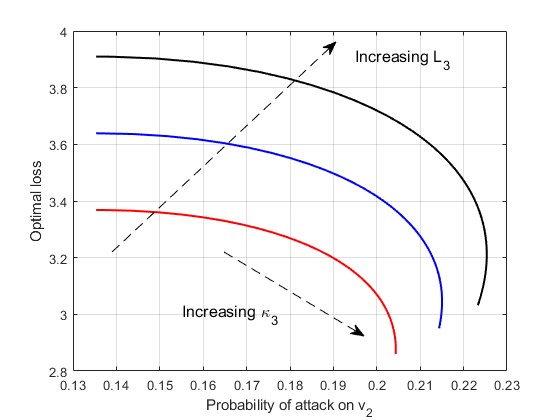}
    \caption{\rev{Trade-off between $p^*_2$ and $\mathbf{L}^*_{\textrm{srs,post}}$ after a series intervention}}
    \label{fig: pareto}
\end{subfigure}
\hfill
\begin{subfigure}{0.5\columnwidth}
    \centering
    \includegraphics[scale=0.3]{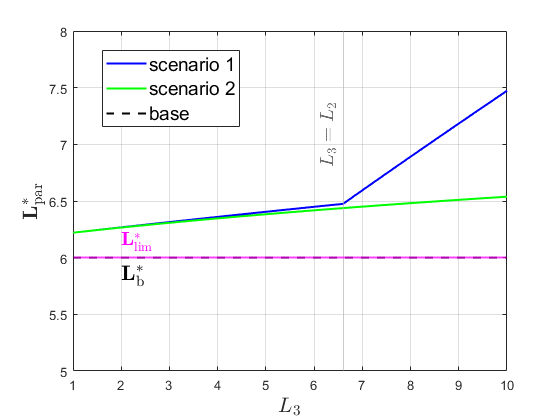}
    \caption{Expected loss vs. $L_3$; parallel intervention (Fig.~\ref{fig: network redesign}(b))}
    \label{fig: parallel 4 with 2}
\end{subfigure}
\hfill
\begin{subfigure}{0.5\columnwidth}
    \centering
    \includegraphics[scale=0.3]{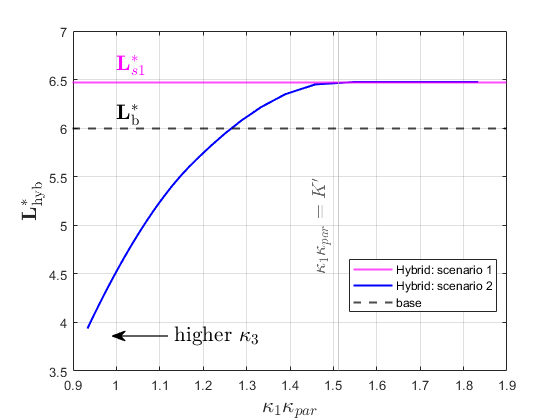}
    \caption{Expected loss vs. $\kappa_3$; hybrid intervention (Fig.~\ref{fig: network redesign}(c))}
    \label{fig: par hyb}
\end{subfigure}
\caption{Expected loss in different numerical scenarios studied}
\label{fig: test}
\end{figure*}

\subsubsection{Parallel connection: structural/physical redundancy}
\label{subsec: parallel}
Next we study the addition of a parallel connection, as illustrated in Fig. \ref{fig: network redesign}(b). This intervention can be seen as improving the number of redundant components in the system to improve its tolerance to physical failures; examples include, adding redundant communication lines, back-up generators, etc. 

We split our analysis into two scenarios: 1) when the condition of Lemma \ref{lemma: non-zero parallel} is satisfied, and 2) when it is not. In scenario 1, the node with the lower stand-alone loss between $v_2$ and $v_3$ receives no investment. The total expected loss here depends on the relative value of $L_2$ and $L_3$: 
\[ \mathbf{L}^*_{\text{par}} = \begin{cases}
        (L_1p +L_3p^2 +L_gp^3)\Big(\frac{L_3 +L_gp}{L_2+L_gp}\Big)^{\frac{-\kappa_1}{\kappa_2}}e^{-\kappa_1B}, \\
        \text{if } L_2 \geq L_3; 
        & \\
        (L_1p +L_2p^2 +L_gp^3)\Big(\frac{L_2 +L_gp}{L_3+L_gp}\Big)^{\frac{-\kappa_1}{\kappa_3}}e^{-\kappa_1B}, \\ \text{if } L_2 < L_3.
        \end{cases} \]        
In this scenario, since $\kappa_2 > \kappa_1$ (assumption from the base case), even for low $L_3$, $\mathbf{L}^*_{\text{par}} > \mathbf{L}^*_{\text{b}}$. Additionally, since $\frac{\partial\mathbf{L}^*_{\text{par}}}{\partial L_3} > 0$, the loss increases with increasing $L_3$. Increasing $\kappa_3$ (until the condition of Lemma \ref{lemma: non-zero parallel} satisfied) has no effect when $L_2 \geq L_3$, and a limited effect in reducing loss when $L_2<L_3$. This means that overall, the addition of $v_3$ in this scenario tends to increase the expected loss relative to the base case.  

In scenario 2, letting $\kappa_{par} = \frac{1}{\kappa_2} + \frac{1}{\kappa_3}$:
\begin{multline*}
    \mathbf{L}^*_{\text{par}} = \frac{L_1p}{1-\kappa_1\kappa_{par}}\Big(\frac{\kappa_1\kappa_{par}}{1-\kappa_1\kappa_{par}}\Big(\frac{L_1p}{L_2p^2 + L_gp^3}\Big)\Big)^{\frac{-\kappa_1}{\kappa_2}}\\
\Big(\frac{\kappa_1\kappa_{par}}{1-\kappa_1\kappa_{par}}\Big(\frac{L_1p}{L_3p^2 + L_gp^3}\Big)\Big)^{\frac{-\kappa_1}{\kappa_3}}e^{-\kappa_1B}.
\end{multline*}

Figure \ref{fig: parallel 4 with 2} numerically illustrates the effect of varying $L_3$ on $\mathbf{L}^*_{par}$ in both scenarios, with the other nodes fixed at the same attributes as in Fig.~\ref{fig: series_additional}, and $\kappa_3$ is chosen such that it is the lowest value at which both $v_2$ and $v_3$ receive investment in scenario 2. Firstly, we observe that the parallel case has a higher total expected loss than the base, irrespective of the scenario. This is to be expected (in general) as the number of paths to the target have increased and the designer has to potentially split investments over multiple paths. Similar to the series case, it can be shown that the total expected loss keeps increasing with increasing $L_3$ and decreases with increasing $\kappa_3$. As $\kappa_3$ increases, the investment on $v_3$ decreases as it gives more return-on-investment. In the limiting case (very high $\kappa_3$), $v_3$ receives very little investment with all the budget going on $v_2$ to balance losses across both paths. Hence the total expected loss in the limiting case would be $\mathbf{L}^*_{{\text{limit}}} = L_1p + (L_2p^2 + L_gp^3)e^{-\kappa_2 B}$, so that $\mathbf{L}^*_{{\text{limit}}} = \mathbf{L}^*_{{\text{b}}}$.

\textbf{In summary}, adding a structurally redundant node $v_3$ does not in general improve the security posture compared to the base architecture. At best, with (very) high return-on-investment $\kappa_3$ or low stand-alone loss $L_3$, the total expected loss remains close to the base case. Overall, adding such redundancies can improve operational reliability, but increases the attack surface (and hence expected loss) in the CPS.

\subsubsection{Hybrid connection: functional/informational redundancy}
\label{subsec: hybrid}

Our next intervention assesses the impact of introducing {functional redundancy} in a system. A functionally redundant component can be used to carry out the same tasks as an existing node; for instance an additional sensor can be added to attain signals for health monitoring, or anomaly detection and isolation. While functioning independently, information from such components is generally used in unison for decision making. As a result, a successful attacker would need to compromise \emph{both} nodes (as least to some extent) to proceed in the stepping stone attack towards $v_g$. To capture this, we add the nodes $v_{2'}$ and $v_{3'}$ after $v_3$ and $v_2$, respectively. 

Considering the similarity in the type of nodes $v_2$ and $v_3$, we assume $L_2 = L_3 = L$, and let the auxiliary nodes have zero loss, i.e., $L_{2'}=L_{3'}=0$. We again denote $\kappa_{par} = \frac{1}{\kappa_2} + \frac{1}{\kappa_3}$. The defender equalizes the expected losses over both paths, leading to $x^*_2 \kappa_2 = x^*_3 \kappa_3$. Here, $x^*_2 \neq 0$ and $x^*_3 \neq 0$ when
{\begin{gather*}
    L_1 < \Big(\frac{1}{\kappa_1\kappa_{par}} - 1\Big)Lp + \Big( \frac{2}{\kappa_1\kappa_{par}} - 1\Big)L_gp^3~.
\end{gather*}
Similar to the other sections, we divide our analysis into two scenarios: 1) when $\kappa_1\kappa_{par} > 2$ and 2) when $\kappa_1\kappa_{par} < 2$. In scenario 1, it can be seen that the above condition fails irrespective of the other security attributes leading to $x^*_2 = x^*_3 = 0$. Similar to the previous sections, if the stand-alone loss of the root of the paths is significantly more than that of the other nodes, all investment is made on the root. Since all of the parallel nodes have the same stand-alone loss, all of them receive zero investment. In this case, we get the total expected loss to be $\mathbf{L}^*_{\text{s1}} =  (L_1p +Lp^2 +L_gp^4)e^{-\kappa_1B}$ which does not depend on $\kappa_3$. Note that unlike the parallel case, $L_g$ is multiplied with $p^4$ instead of $p^3$. This reflects the additional step that the attacker must perform to compromise the target. {As a result, the addition of such functional redundancies (hybrid nodes) can \emph{reduce} loss compared to the base network.} 

We next illustrate the effect of varying $\kappa_3$ numerically, in both scenarios, in Figure \ref{fig: par hyb}. Decreasing $\kappa_1\kappa_{par}$ (increasing $\kappa_3$) beyond $K' = \frac{Lp + 2L_gp^3}{L_1 + Lp + L_gp^3}$ leads to decreasing $\mathbf{L}^*_{\text{hyb}}$. This means that unlike the parallel case, the hybrid architecture allows a significant reduction in total expected loss at higher $\kappa_3$. This is expected: although new paths to the target are included and it may seem that the the attack surface has increased, each path is more robust and harder to compromise, since the information from all paths is fused at the target. 

\textbf{In summary,} the addition of functionally/informationally redundant nodes can \rev{decrease} the expected loss. This is because the probability of compromise \emph{at} the target node is reduced with the addition of the hybrid node due to the additional series components included in these nodes. 
Further improvements can be achieved if the return-on-investment of the additional node $v_3$ also has high return-on-investment $\kappa_3$.}

\subsubsection{Additional input node: new features}
\label{subsec: input} 
We finally look at the effect of adding entry nodes to the network, illustrated in Figure \ref{fig: network redesign}(d). This intervention represents scenarios involving adding additional features or functionalities; e.g., adding Bluetooth, wireless connectivity, etc., to improve user experience. \rev{Considering equal stand-alone losses $L$ for the input nodes}, we first consider the case when the conditions of Lemma \ref{lemma:zero destination node} are satisfied. {Informally, this happens when the return-on-investment or the stand-alone loss of the entry nodes is sufficiently higher than the next node $v_2$ and hence $x^*_2 = 0$. In this case, the total expected loss is
\[\mathbf{L}^*_{\text{inp}} = (Lp + L_2p^2 + L_gp^3)e^{\frac{-B}{\kappa_{inp}}}\]
where $\kappa_{inp} = \frac{1}{\kappa_1} + \frac{1}{\kappa_3}$. When the conditions of Lemma \ref{lemma:zero destination node} are not met, the total expected loss is
\[ \mathbf{L}^*_{\text{inp}} = \\ \tfrac{L\kappa_2\kappa_{inp}}{\kappa_2\kappa_{inp}-1}\Big(\tfrac{1}{\kappa_2\kappa_{inp}-1}\Big(\tfrac{Lp}{L_2p^2+L_gp^3}\Big)\Big)^{\frac{-1}{\kappa_2\kappa_{inp}}}e^{\frac{-B}{\kappa_{inp}}}.\]
It can be shown (analytically and numerically) that both effective losses are higher compared to the base case. This follows intuition, since adding more functionalities only leads to a larger attack surface, without providing any downstream benefits as all prior attack paths remain unaffected.}

\section{Applications}
\label{sec: application}

In this section, we provide numerical experiments to illustrate our previous analysis in two applications: a remote attack on an industrial SCADA system, where the goal of the attacker is to maliciously control physical actuators, and an attack on an automotive system, where the goal of the attacker is to remotely access the Controller Area Network (CAN) to send malicious commands.  All computations are done using the CasADi optimization toolbox \cite{andersson2019casadi} with the IPOPT solver on a generic laptop using an Intel i7 CPU @ 2.8 GHz.

\subsection{Remote attack on a SCADA system}
We look at an attack on an industrial CPS discussed in \cite{few2021case}. Here the attacker tries to obtain the control of actuators by performing the following steps: 1) gaining administrative privileges; 2) bypassing DMZ firewalls; and 3) gaining access to an industrial PLC. A simplified version of the attack graph is shown in Fig.~\ref{fig: scada attack graph}. The descriptions and security attributes of each node are given in Table~\ref{table: scada description} \rev{in Appendix \ref{app:experiment-tables}}.
We utilize the attack scoring mechanisms, the Common Vulnerability Scoring System (CVSS) \cite{cvss} to quantify $p^0$. We quantify the respective stand-alone losses $L$ for each node subjectively based on relative critically to the safety of the system.\footnote{Note that security quantification generally involves significant expert judgement; the values here are chosen to showcase the proposed methodology.} 

\begin{figure}[t]
    \centering
    \includegraphics[scale=0.3]{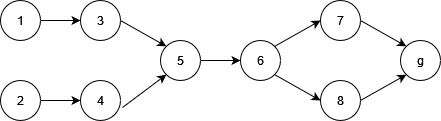}
    \caption{Attack graph for a remote attack on an industrial CPS}
    \label{fig: scada attack graph}
\end{figure}
 
We first solve \eqref{eq: atkOptim} for this attack graph. This problem has 9 variables, 12 inequality constraints, and 1 equality constraint, and it took around 0.030s to compute the optimal solution. With a budget of $B=5$ units, the optimal investment strategy is $\mathbf{x}^*_{\text{scada}} = \{1.4689, 1.4689, 0, 0, 2.0447, 0, 0, 0.0174\}$ and the expected loss is $\mathbf{L}^*_{\text{scada}}=586.67$.

Next, we apply our reduction algorithm from Proposition \ref{prop:reduction-alg} to first reduce the attack graph, and then find the optimal solution. The reduction steps are detailed below: 
\begin{itemize}
    \item First, we identify the two series links $\{v_1 \rightarrow v_3\}$ and $\{v_2 \rightarrow v_4\}$. We observe that $\kappa_3 = \kappa_1$ and $\kappa_4 = \kappa_2$. From Lemma \ref{lemma:series_zero_investments}, we immediately obtain $x^*_3 = x^*_4 = 0$. Nodes $\{v_1 \rightarrow v_3\}$ and $\{v_2 \rightarrow v_4\}$ can be replaced with equivalent nodes $v_{13}$ and $v_{24}$. 
    \item Next, we identify the parallel links $\{v_6 \rightarrow v_7 \rightarrow v_g\}$ and $\{v_6 \rightarrow v_8 \rightarrow v_g\}$. Since $\kappa_6\kappa_{78} > 1$, where $\kappa_{78} = \frac{1}{\kappa_7} + \frac{1}{\kappa_8}$, from Lemma \ref{lemma: non-zero parallel} we obtain $x^*_7 = 0$. 
    \item Next we identify the series link $\{v_5 \rightarrow v_6 \rightarrow v_8\}$ and observe that $\kappa_5 = \kappa_6$. From Lemma \ref{lemma:series_zero_investments}, we get $x^*_6 = 0$ followed by a reduction of this link to a single node $v_{56}$.
    \item We finally look at the input nodes $\{v_{13} \rightarrow v_5\}$ and $\{v_{24} \rightarrow v_5\}$. Let $L_{13} = L_{24} = L_{12}$ and $\kappa_{12} = \frac{1}{\kappa_1} + \frac{1}{\kappa_2}$. We observe that both the conditions from Lemma \ref{lemma:zero destination node} are not satisfied, i.e. $\kappa_5 \kappa_{12} > 1 $ and $L_{12} > (\kappa_5 \kappa_{12} - 1)L_{56}$. From Lemma \ref{lemma:zero destination node} we get $x^*_5 \neq 0$ and from Lemma \ref{lemma: inputs}, nodes $\{(v_{13}  \rightarrow v_{56}), (v_{24} \rightarrow v_{56})\}$ can be replaced with a single equivalent node $v_{in}$.
\end{itemize}

Using this procedure, the attack graph in Fig. \ref{fig: scada attack graph} is reduced to $v_{in} \rightarrow v_8 \rightarrow v_g$. The optimal loss on this network can again be found to be $\mathbf{L}^*_{\text{scada}}=586.67$. 
This optimization problem now has only 3 variables, 3 inequality constraints, and 1 equality constraint, and took around 0.02s to solve. This is a 50\% reduction in the computation time. While the absolute speed improvement seems minimal in this small problem, for much larger systems (for example, large-scale electric grids), we will get a significant improvement in absolute runtime as well. 

Additionally, we look at how our obtained optimal defense strategy compares with a baseline ``perimeter defense'' strategy. Such a perimeter defense strategy is similar to the ones recommended by \cite{oruganti2021impact} and \cite{abdallah2020behavioral} (where such strategies are shown to be optimal in the homogeneous $\kappa$ case and with rational defenders). Under such a strategy, only input nodes $v_1$ and $v_2$ would receive an investment of $2.25$ each. The total expected loss in this case would be $2.09\times 10^4$, which is much larger than $\mathbf{L}^*$. This highlights how accounting for the heterogeneity of assets in their return-on-investment can substantially impact optimal defense strategies. 

\subsection{Remote attack on an automotive system}

\subsubsection{System overview and quantification}
Next, we consider the remote attack on an automotive Controller Area Network (CAN), similar to the ones reported in \cite{miller2015remote} and \cite{cai20190}. Modern automotive systems provide many connectivity features such as Wi-Fi, Bluetooth (BT), cellular connectivity (CELL), and physical ports such as USB ports on the infotainment module. OBD-II ports (legally mandated for vehicle diagnostics) allow for a physical connection to the vehicle. In the worst-case, vulnerabilities in the communication protocols, firmware of the Electronic Control Units (ECUs), and software vulnerabilities on the infotainment module may allow access to the CAN and control of certain safety-critical actuators. 

Once notified of a vulnerability or a flaw, the manufacturer is faced with making a decision on (optimal) resource allocation for immediate as well as long-term security of the vehicle \cite{cai20190}. This decision is complicated further by: 1) the number of vulnerabilities with varying impacts and complexity, and 2) fleet exposure based on hardware and software version combinations. We apply our proposed approach on this problem to showcase its benefits and possible insights in such scenarios. The simplified attack graph from the attacks described in \cite{cai20190} is shown in Fig.~\ref{fig: automotive atk grf} and the descriptions of the nodes are provided in Table~\ref{table: autoAtkGrf description} \rev{in Appendix \ref{app:experiment-tables}}. 

\begin{figure}[t]
    \centering
    \includegraphics[scale=0.3]{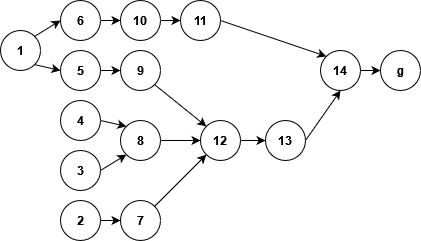}
    \caption{Attack graph for an 
    attack on an automotive system}
    \label{fig: automotive atk grf}
\end{figure}

We have set the numerical values of the nodes' attributes as follows. For simplicity, we classify the base probability of successful attack $p^0$ using a \{high, medium, low, very low\} ordinal scale with probabilities of 1, 0.75, 0.5, and 0.25, respectively. The remaning security attributes for each node are derived subjectively following discussions provided in \cite{cai20190}. For quantifying $\kappa$, we have taken into account a few factors such as user interaction with these units, fleet exposure, and individual criticality. For example, {Node-10} which represents the firmware on the telematics unit itself, has been given a low $\kappa$ as it may require user interaction, and as such we assume that significant security improvement on this unit is harder as it may degrade customer experience. {Node-6} on the other hand, represents remote communication procedures between back-end manufacturer servers and the telematics unit to perform diagnostic operations without user interaction, and hence we have set it as having a higher $\kappa$. 

\subsubsection{Optimal investment decisions}  
For this system, due to standardization and being external to the manufacturer, we assume that no investment is possible on Node-1 ($x^*_1 = 0$). Assuming a budget $B = 5$ units, we obtain  $\mathbf{x}^*_{\text{veh}} = [0,   0,   0,   0,    0.8179,    2.8317,    0.0956,    0.3820,    0,  0.5796,  0,    0.2932,  \\ 0,    0]$ and $\mathbf{L}^*_{\text{veh}} = 1.8837$. 

Since the numbers depend heavily on the quantification of the security attributes, we only interpret the \emph{ranking} of the investments. It can be seen that Nodes 5, 6, and 10 receive the highest investment. This indicates that immediate resources, whether that be time, number of personnel, or monetary resources, should be used for securing the nodes responsible for remote access; in fact, according to \cite{cai20190}, the manufacturer prioritized Nodes 6 and 10 in its defense strategy.  

  \subsubsection{Design interventions} {When the product (nodes) are in post-production, direct investment on these nodes may not be possible, and the manufacturer may instead perform a network redesign. The chosen countermeasure by the manufacturer in reality was to forward the remote request from the backend server (received from the customer) to another server which updates the configuration on the vehicle \emph{only} by HTTPS allowing for a higher payload and security \cite{cai20190}. To perform the intended command, the requested configuration must match with the updated configuration. This can be seen as setting up a hybrid communication completely in the backend with minimal user interaction~\cite{bmw}. To capture this, we introduce the hybrid node Node-15 after Node-6 and in parallel with Node-10 with attributes $L_{15} = L_{10}$, $p^0_{15} = 0.25$, and $\kappa_{15} = \kappa_{10}$. With this redesign, we obtain $L^*_{veh} = 1.7625$, which is lower than the original base loss, indicating the effectiveness of the countermeasure in improving the vehicle's security, and matching the intervention analysis discussed in Section~\ref{sec: formations}.}

\section{Conclusion and future work}
\label{sec: conlusion}

We studied the effect of network design interventions on a network of interdependent assets with varying sensitivity to investment. We first proposed an algorithm to convert large attack graphs to a reduced attack graph with fewer variables and constraints,  allowing for more computationally efficient system level loss analysis and comparison. We then considered four potential types of design interventions by a defender of this system: a series node addition, a parallel node addition, a hybrid node addition, and an additional input node. We find that added endurance (series) or informationally redundant (hybrid) components can help decrease the expected loss, while adding new features (input) or physically redundant (parallel) components in general increases expected loss, and can only be justified if there is additional stand-alone benefit to these additions. We further showcased the usability of the proposed approach by applying it on two use cases. The results recommended by this approach are close to realistic decisions taken by the manufacturers and to outcomes from studies performed by security agencies. Future work includes, \rev{generalizing the reduction algorithm to consider scenarios with insufficient budget}, and extending our proposed framework with Reinforcement Learning based techniques to study both optimal security investments and network design interventions in multi-stage attacker-defender games. 


\bibliographystyle{IEEEtran}
\bibliography{ref.bib}

\clearpage
\appendices
\label{sec: appendix}

\section{Summary of Notation}\label{app:notation}

\begin{table}[!h]
\centering
\caption{Summary of notation}
\begin{tabular}{|l|l|}
\hline
\textbf{Symbol}                    & \textbf{Description}                                                                                                                \\ \hline
$\mathcal{G}=\{\mathcal{V}, \mathcal{E}\}$             & Directed acyclic graph with nodes $\mathcal{V}$ and edges $\mathcal{E}$                                                                                                \\ \hline
$v_g$                     & \begin{tabular}[c]{@{}l@{}} Unique target node \end{tabular}                                           \\ \hline
$Post(v)$                 & \begin{tabular}[c]{@{}l@{}}Set of all nodes that can be reached from $v$\end{tabular}                                    \\ \hline
$Pre(v)$                  & \begin{tabular}[c]{@{}l@{}}Set of all the nodes from where $v$ can be reached\end{tabular}                              \\ \hline
$P_{ij}$                  & \begin{tabular}[c]{@{}l@{}}An attack path from $v_i$ to $v_j$\end{tabular}                                               \\ \hline
$\mathcal{P}_{ij}$        & \begin{tabular}[c]{@{}l@{}}Set of all paths from $v_i$ to $v_j$\end{tabular}                                             \\ \hline
$x_i$ & Security investment on node $v_i$ \\ \hline
$p_i(x_i)$                & \begin{tabular}[c]{@{}l@{}}Probability of successful attack on $v_i$ given an \\ investment of $x_i$ on it\end{tabular} \\ \hline
$\kappa_i$                & \begin{tabular}[c]{@{}l@{}}Sensitivity of node $v_i$ to investments\end{tabular}                                             \\ \hline
$p^0_i$                   & \begin{tabular}[c]{@{}l@{}}Default probability of successful attack on $v_i$\end{tabular}                                      \\ \hline
$L_i$                     & Stand-alone loss of $v_i$ if compromised                                                                                                 \\ \hline
$B$ & Security budget of the defender \\ \hline
\end{tabular}
\label{t:notation}
\end{table}

\section{Description and security attributes for the numerical experiments in Section~\ref{sec: application}}\label{app:experiment-tables}

\begin{table}[!h]
\centering
\caption{Description and security attributes for Fig. \ref{fig: scada attack graph}}
\label{table: scada description}
\begin{tabular}{|c|c|c|c|c|}
\hline
\textbf{Node} & \textbf{Description}    & $\mathbf{p^0}$ & $\mathbf{L}$ (x$10^6$)& $\mathbf{\kappa}$ \\ \hline
1    & Email host                                                                 & 0.18  & 0.01   & 1        \\ \hline
2    & Web app                                                                    & 0.18  & 0.01   & 1        \\ \hline
3    & \begin{tabular}[c]{@{}c@{}}Privilege\\ escalation\end{tabular}             & 0.09  & 0.02   & 1        \\ \hline
4    & \begin{tabular}[c]{@{}c@{}}Web app\\ host\end{tabular}                     & 0.09  & 0.02   & 1        \\ \hline
5    & \begin{tabular}[c]{@{}c@{}}Defeat DMZ\\ access control\end{tabular}        & 0.09  & 20  & 3        \\ \hline
6    & \begin{tabular}[c]{@{}c@{}}Connect to\\ firewall\end{tabular}              & 0.13  & 0.2   & 3        \\ \hline
7    & \begin{tabular}[c]{@{}c@{}}Gain access to\\ engg. workstation\end{tabular} & 0.08  & 1000  & 5        \\ \hline
8    & \begin{tabular}[c]{@{}c@{}}Gain access to\\ SCADA controls\end{tabular}    & 0.08  & 2000  & 5        \\ \hline
g    & Manipulate PLC                                                             & 1  & 10000  & -        \\ \hline
9    & Manipulate PLC                                                             & 0.07  & 50  & 5        \\ \hline
g    & Malicious actuation                                                        & -     & 100   & -        \\ \hline
\end{tabular}
\end{table}

\begin{table}[!h]
\centering
\caption{Description and security attributes for 
of the nodes and their characteristics in the attack graph illustrated in 
Fig. \ref{fig: automotive atk grf}. 
}
\label{table: autoAtkGrf description}
\begin{tabular}{lllll}
\hline
\multicolumn{1}{|l|}{\textbf{Node}} & \multicolumn{1}{l|}{\textbf{Description}}         & \multicolumn{1}{l|}{{$p^0$}} & \multicolumn{1}{l|}{$L$} & \multicolumn{1}{l|}{$\kappa$} \\ \hline
\multicolumn{1}{|l|}{1}             & \multicolumn{1}{l|}{Cellular connection}          & \multicolumn{1}{l|}{1}              & \multicolumn{1}{l|}{1}            & \multicolumn{1}{l|}{1}                 \\ \hline
\multicolumn{1}{|l|}{2}             & \multicolumn{1}{l|}{OBD-II}                       & \multicolumn{1}{l|}{0.25}              & \multicolumn{1}{l|}{1}            & \multicolumn{1}{l|}{1}                 \\ \hline
\multicolumn{1}{|l|}{3}             & \multicolumn{1}{l|}{Wi-Fi}                        & \multicolumn{1}{l|}{0.5}              & \multicolumn{1}{l|}{1}            & \multicolumn{1}{l|}{1}                 \\ \hline
\multicolumn{1}{|l|}{4}             & \multicolumn{1}{l|}{USB}                          & \multicolumn{1}{l|}{0.25}              & \multicolumn{1}{l|}{1}            & \multicolumn{1}{l|}{1}                 \\ \hline
\multicolumn{1}{|l|}{5}             & \multicolumn{1}{l|}{Connected services}           & \multicolumn{1}{l|}{0.75}           & \multicolumn{1}{l|}{5}           & \multicolumn{1}{l|}{3}                 \\ \hline
\multicolumn{1}{|l|}{6}             & \multicolumn{1}{l|}{Connection to vehicle}      & \multicolumn{1}{l|}{0.75}            & \multicolumn{1}{l|}{10}           & \multicolumn{1}{l|}{1}                 \\ \hline
\multicolumn{1}{|l|}{7}             & \multicolumn{1}{l|}{Internal network diagnostics} & \multicolumn{1}{l|}{0.75}           & \multicolumn{1}{l|}{5}           & \multicolumn{1}{l|}{3}                 \\ \hline
\multicolumn{1}{|l|}{8}             & \multicolumn{1}{l|}{Head Unit internal arch.}            & \multicolumn{1}{l|}{0.75}           & \multicolumn{1}{l|}{5}            & \multicolumn{1}{l|}{3}                 \\ \hline
\multicolumn{1}{|l|}{9}             & \multicolumn{1}{l|}{Connected services comm.}     & \multicolumn{1}{l|}{0.75}            & \multicolumn{1}{l|}{5}           & \multicolumn{1}{l|}{3}                 \\ \hline
\multicolumn{1}{|l|}{10}            & \multicolumn{1}{l|}{Remote diagnostic comm.}      & \multicolumn{1}{l|}{0.75}            & \multicolumn{1}{l|}{20}           & \multicolumn{1}{l|}{2}                 \\ \hline
\multicolumn{1}{|l|}{11}            & \multicolumn{1}{l|}{Telematics Control Unit}      & \multicolumn{1}{l|}{0.25}            & \multicolumn{1}{l|}{20}           & \multicolumn{1}{l|}{2}                 \\ \hline
\multicolumn{1}{|l|}{12}            & \multicolumn{1}{l|}{Head unit}                    & \multicolumn{1}{l|}{0.5}           & \multicolumn{1}{l|}{5}            & \multicolumn{1}{l|}{2}                 \\ \hline
\multicolumn{1}{|l|}{13}            & \multicolumn{1}{l|}{CAN tx/rx}                    & \multicolumn{1}{l|}{0.25}            & \multicolumn{1}{l|}{5}           & \multicolumn{1}{l|}{1}                 \\ \hline
\multicolumn{1}{|l|}{14}            & \multicolumn{1}{l|}{Central Gateway}              & \multicolumn{1}{l|}{1}              & \multicolumn{1}{l|}{20}           & \multicolumn{1}{l|}{1}                 \\ \hline
\multicolumn{1}{|l|}{g}             & \multicolumn{1}{l|}{Vehicle CAN}                  & \multicolumn{1}{l|}{1}              & \multicolumn{1}{l|}{50}           & \multicolumn{1}{l|}{-}                 \\ \hline
                                   &                                                   &                                     &                                   &                                       
\end{tabular}
\end{table}

\rev{\section{Obtaining optimal investments on the original graph and proof for Lemma \ref{lemma: sufficient-budget}}
\label{app: optimal_investments}

While Algorithm \ref{alg: reduction} provides the final reduced graph and the optimal expected loss. The resulting nodes may lose their physical meaning through the reduction process. It is essential for the defender to query the optimal investments on the nodes of the original attack graph. These are obtained through the reduction procedure and can be stored during each iteration of Algorithm \ref{alg: reduction}. 

From Lemma \ref{lemma:series-reduction},  given some budget $T$ invested over a series link $\{v_i \rightarrow v_{i+1} \rightarrow \ldots \rightarrow v_{i+n} \rightarrow v_{t}\}$, the optimal investment on each node is given by:
\begin{equation*}
    \begin{gathered}
        x^*_t = \frac{-1}{\kappa_t}\log\Big(\frac{L_{i+n}\kappa_{i+n}}{L_t(\kappa_t-\kappa_{i+n})}\Big), \\
        x^*_j = \frac{-1}{\kappa_j}\log\Big(\frac{L_{j-1}\kappa_{j-1}}{L_j\kappa_{j+1}}\Big(\frac{\kappa_{j+1}-\kappa_j}{\kappa_j-\kappa_{j-1}}\Big)\Big), \\
        x^*_i = T - x^*_t - \sum^{i+n}_{j=i+1} x^*_{j}, ~\forall j \in I =  \{i+1, i+2, \ldots, i+n \}.
    \end{gathered}
\end{equation*}
Similarly, from Lemma \ref{lemma: parallel}, the optimal investments on a parallel network $\{( v_i \rightarrow v_{i+1} \rightarrow v_g ), ( v_i \rightarrow v_{i+2} \rightarrow v_g ), \ldots, ( v_i \rightarrow v_{i+n} \rightarrow v_g )\}$ is given by:
\begin{equation*}
    \begin{gathered}
        x^*_j = \tfrac{-1}{\kappa_r} \log\Big(\frac{\kappa_i\kappa_{par}}{1-\kappa_i\kappa_{par}}(\frac{L_i}{L_j+L_g})\Big), \forall r = \{i+1, \ldots, i+n\} \\
        x^*_i = T - \sum^{i+n}_{r=i+1} x^*_j, ~\forall j \in I.
    \end{gathered}
\end{equation*}
And from Lemma \ref{lemma: inputs}, for a graph $\{(v_{i} \rightarrow v_{t}), (v_{i+1} \rightarrow v_{t}), \ldots, (v_{i+n} \rightarrow v_t) \}$ with $L_j = L, ~\forall j \in I$, the optimal investments on the nodes are given by:
\begin{equation*}
    \begin{gathered}
        x^*_t = \frac{-1}{\kappa_t}\log\Big(\frac{L}{L_t(\kappa_t \kappa_{par}-1)} \Big) \\
        x^*_j = \frac{1}{\kappa_j \kappa_{par}}\Big(T - x^*_t\Big), \forall j \in I.
    \end{gathered}
\end{equation*}
We see that other than the input nodes in each link, the optimal investments on the other nodes do not depend on the budget $T$.  
Hence, any increase in budget would not change the investment on these nodes. Additionally, a sufficient budget $T$ over each link would then be the minimum budget $T$ such that:
\[ T - \sum^{i+n}_{r=i+1} x^*_r \geq 0 \] 

A sequence of steps that the defender could follow to assess the utility of a design intervention would be:
\begin{enumerate}
    \item Perform a network design intervention on the original graph
    \item Run Algorithm \ref{alg: reduction} to obtain the optimal expected loss after the intervention while storing the optimal investments on the reduced nodes during each iteration.
    \item Output the optimal investments on every node.
\end{enumerate}
}

\subsection{Proof of Lemma~\ref{lemma: sufficient-budget}}
\label{app: sufficient_proof}
\rev{
\begin{proof}
The proof is by construction, and follows directly from the above arguments on finding the optimal investment profiles in each series or parallel subnetwork. 
\end{proof}
}
\rev{\section{Series reduction under insufficient budget}
\label{app:insufficient}
Under an insufficient budget, the nodes closer to the end of the link receive investment first. Investment is made sequentially, starting from the target node and moving upstream, each node receiving their optimal investments following Lemma \ref{lemma:series-reduction} until the budget runs out. This is proved in the following lemma:

\begin{lemma}
\label{lemma: nonzeroSeries}
Consider an attack graph $\mathcal{G}$ containing the series of nodes $\{v_i \rightarrow v_{i+1} \rightarrow \cdots \rightarrow v_{i+n} \rightarrow v_{i+n+1}\}$ and a budget $T$ spent over these nodes. Under a optimal investment strategy $\mathbf{x}^*$, there exits a non-zero optimal investment on $v_j, j \in \{i, i+1, i+2,\ldots i+n\}$, when:
    \begin{equation*}
    \label{Lemma: eq: budget condition}
        \sum_{v_t \in Post(v_j)} x^*_t < T 
    \end{equation*}
\end{lemma}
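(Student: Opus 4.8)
The plan is to localize the analysis to the series path $\{v_i \to \cdots \to v_{i+n} \to v_{i+n+1}\}$, on which the attacker's expected loss (after dropping the $p^0_i$ factors as in Section~\ref{sec: reductions}) is $\mathcal{L} = \sum_{k=i}^{i+n+1} L_k \exp(-\sum_{m=i}^{k}\kappa_m x_m)$, with $x_{i+n+1}=0$ at the target. I would first note, as already established for \eqref{eq: atkOptim}, that $\mathcal{L}$ is strictly convex and the feasible set compact, so the constrained minimizer is unique; and since $\partial\mathcal{L}/\partial x_j = -\kappa_j \sum_{k\ge j} L_k \exp(-\sum_{m\le k}\kappa_m x_m) < 0$, every extra unit of budget strictly lowers $\mathcal{L}$, so the budget constraint is active, $\sum_{t=i}^{i+n} x^*_t = T$. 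This already rewrites the hypothesis $\sum_{v_t\in Post(v_j)} x^*_t < T$ as the equivalent statement that strictly positive investment is placed on some node upstream of $v_j$; the claim $x^*_j\neq 0$ is therefore equivalent to showing that the set of funded nodes is a downstream-contiguous suffix of the path.

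Next I would write the KKT conditions with a budget multiplier $\lambda$ and nonnegativity multipliers $\mu_j\ge 0$. Setting $R_j := \sum_{k\ge j} L_k \exp(-\sum_{m\le k}\kappa_m x_m)$ (the tail loss from $v_j$ onward), stationarity reads $\kappa_j R_j = \lambda - \mu_j$, so funded nodes ($x^*_j>0$) satisfy $\kappa_j R_j = \lambda$ while unfunded nodes satisfy $\kappa_j R_j \le \lambda$. The quantity $R_j$ is non-increasing in $j$ because $R_j - R_{j+1} = L_j \exp(-\sum_{m\le j}\kappa_m x_m) \ge 0$. I would also import the structural consequence of having already applied Lemma~\ref{lemma:series_zero_investments} to exhaustion: on the reduced (\emph{spreading}) path every node is funded under sufficient budget, which forces the strict ordering $\kappa_i < \kappa_{i+1} < \cdots < \kappa_{i+n}$ together with the accompanying upper bounds on each $L_j$ relative to its downstream losses.

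The heart of the argument is to show that the funded set is a suffix, for which it suffices to rule out an adjacent pair with $x^*_j > 0$ and $x^*_{j+1}=0$ (for $j+1\le i+n$): if no such pair exists, then $x^*_j>0$ implies $x^*_{j+1}>0$, so the funded nodes form the contiguous block $\{v_{j^*},\ldots,v_{i+n}\}$. Assuming such a pair for contradiction, the KKT relations give $\kappa_{j+1}R_{j+1}\le\lambda=\kappa_j R_j$, i.e.\ a \emph{lower} marginal value at the higher-$\kappa$ downstream node; I would then substitute the loss bound from Lemma~\ref{lemma:series_zero_investments} (the condition that $v_{j+1}$ be funded, hence not subsumed into $v_j$) and show it is exactly complementary to the inequality $\kappa_{j+1}R_{j+1}\le\kappa_j R_j$. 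A direct computation on the representative two-node comparison confirms that the spreading bound on $L_j$ and this gap-permitting inequality cannot hold simultaneously, which is what makes the suffix structure true rather than merely plausible.

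The main obstacle is carrying this last step on a \emph{general} path, because the loss bounds in Lemma~\ref{lemma:series_zero_investments} chain through $\kappa_{j-1}$ and the downstream losses rather than referring to the pair $(v_j,v_{j+1})$ in isolation. The cleanest way I would handle this is by induction along the path from the target upstream: at each step the already-analyzed downstream tail is collapsed, via the equivalent-node map of Lemma~\ref{lemma:series-reduction}, into a single node whose $\kappa$ equals that of the tail's first node, reducing every comparison to the two-node case above. Care is needed because that equivalence is stated under sufficient budget; I would restrict its use to the tail that is fully funded under the induction hypothesis, where the equivalence applies verbatim. Once the suffix structure is established, translating it back gives the statement: if $\sum_{v_t\in Post(v_j)} x^*_t < T$ then some upstream node is funded, and suffix-closedness forces $x^*_j\neq 0$, completing the proof.
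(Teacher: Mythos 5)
Your proposal is correct in substance, but it takes a genuinely different route from the paper's proof, so it is worth comparing the two. The paper argues via the closed-form solution structure: by Lemma~\ref{lemma:series-reduction} and the investment formulas in Appendix~\ref{app: optimal_investments}, on a spreading series path every node except the most-upstream one receives an optimal investment that is \emph{independent of the budget}; the most-upstream node simply absorbs the residual $T - \sum_{\text{downstream}} x^*_t$, and if that residual is non-positive, nonnegativity clamps it to zero. The induction then prepends upstream nodes one at a time, collapsing the already-zero upstream prefix into a single node with summed loss. Your argument instead works with the KKT multipliers directly: active budget constraint, $\kappa_j R_j = \lambda$ on funded nodes and $\kappa_j R_j \le \lambda$ on unfunded ones, and the funded set is a downstream-contiguous suffix because a funded-above-unfunded adjacent pair would force $\kappa_{j+1}R_{j+1} \le \kappa_j R_j$, which is incompatible with the spreading bounds of Lemma~\ref{lemma:series_zero_investments}. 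Both proofs need the same standing hypothesis, which you state explicitly and the paper leaves implicit: the path must already be reduced to its spreading form, since on a general series path the suffix property is simply false (the perimeter-defense cases of Lemma~\ref{lemma:series_zero_investments} fund an upstream node above unfunded downstream ones). What the paper's route buys is brevity: budget-independence of the interior investments makes the clamping argument essentially one line and avoids your hardest step entirely. What your route buys is self-containedness and a sharper structural statement, at the cost of the complementarity computation; note in particular that your reduction ``to the two-node case'' is not literal when the downstream tail contains an unfunded stretch followed by funded nodes --- there you must chain the spreading bounds across the merged zero-investment nodes (they do telescope: for instance $L_1 < \frac{\kappa_2/\kappa_1 - 1}{1-\kappa_2/\kappa_3}L_2$ and $L_2 < (\kappa_3/\kappa_2 - 1)(L_3+L_g)$ together yield $L_1 < (\kappa_2/\kappa_1 - 1)(L_2+L_3+L_g)$, exactly ruling out the offending KKT configuration), a verification your sketch defers but which does go through.
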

\begin{proof}  
Consider a series path $\mathcal{G}_t = \{v_0 \rightarrow v_1 \rightarrow v_2 \cdots \rightarrow v_t\}$ such that $\sum_{j = 1}^t x^*_j \geq T$. We prove the lemma by induction. For the base case, consider a node $v_{-1}$ added upstream to get $\mathcal{G}_1 = \{v_{-1} \rightarrow v_0 \rightarrow v_1 \rightarrow v_2 \cdots \rightarrow v_t \}$. From the discussion in Appendix \ref{app: optimal_investments} we know that the investments on the $t$ downstream nodes do not change. From Lemma \ref{lemma:series-reduction}, $x^*_{-1} = T - \sum_{j = 0}^{t}x^*_j$. Additionally, depending on the node attributes of $v_0$ and $v_{-1}$, $x^*_0 \geq 0$ which implies $\sum_{j = 0}^{t}x^*_j \geq T$ and $x^*_{-1} \leq 0$. Hence, within the feasible region the minimum occurs at $x^*_{-1} = 0$, proving the base case. Next for the inductive step, consider the path $\{v_{1-n} \rightarrow \cdots \rightarrow v_{-1} \rightarrow v_0 \rightarrow v_1 \rightarrow v_2 \cdots \rightarrow v_t\}$. We assume the statement is true for this sequence of $n+t$ nodes. With this assumption, the upstream $n$ nodes can be replaced with single node with equivalent loss $L_{eq} = \sum_{j=1-n}^0 L_j$. Finally, applying the same steps as the induction base, it is straightforward to show that the statement holds for $n+1+t$ nodes with only the $t$ nodes downstream receiving non-zero investments.
\end{proof}

\begin{conjuncture}
By viewing the entire attack graph as a series of sub-networks (series or parallel), following Lemma \ref{lemma: nonzeroSeries}, under an insufficient budget, the sub-networks closer to the target receive their respective optimal investments until the budget is depleted.    
\end{conjuncture}

\begin{conjuncture}
If the budget $T$ over a parallel link $\{( v_i \rightarrow v_{i+1} \rightarrow v_g ), ( v_i \rightarrow v_{i+2} \rightarrow v_g ), \ldots, ( v_i \rightarrow v_{i+n} \rightarrow v_g )\}$ is insufficient, the \emph{root node} $v_i$ does not receive any investment and the rest of the split across the parallel nodes $v_j$, $j \in I = \{i+1, i+2, \ldots, i+n\}$. The optimal investments on each node then being:
\begin{equation*}
    x^*_r = \frac{1}{k_rk_{par}}\Big(T + \sum_{j= I \setminus \{r\} |} \frac{1}{k_r} \log\Big(\frac{L_j + L_g}{L_r + L_g}\Big)\Big), r \in I
\end{equation*}
\end{conjuncture}

The remaining budget may or may not equalize expected losses across all paths across all parallel paths, with certain nodes receiving no investment. 
}

\clearpage
\section{Online appendix: Proofs for Section~\ref{sec: reductions}}
\label{app:reduction-proofs}

\subsection{Proof of Lemma~\ref{lemma:series_zero_investments}}
\begin{proof}
We first prove (\ref{eq: series_end_conditions}) proposed over nodes $v_{i+n-1} \rightarrow v_{i+n}$. Assume a budget of $T_n$ is spent on these nodes. Denoting the investments on these nodes by $x_{i+n-1}=T_n-x$ and $x_{i+n}=x$, the expected loss for this pair is
\[\mathbf{L}_{n} = e^{-\kappa_{i+n-1} (T_n-x)}\Big(L_{i+n-1} + L_{i+n} e^{-\kappa_{i+n}x}\Big)~.\]
The first derivative of the expected loss comes out to be
\begin{multline*}
\frac{\partial\mathbf{L}_{n}}{\partial x} = e^{-\kappa_{i+n-1} (T_n-x)}\Big(\kappa_{i+n-1}L_{i+n-1} + \\ (\kappa_{i+n-1}-\kappa_{i+n})L_{i+n} e^{-\kappa_{i+n}x}\Big)~.
\end{multline*}
The second derivative is
\begin{multline*}
    \frac{\partial^2\mathbf{L}_{n}}{\partial x^2} = e^{-\kappa_{i+n-1} (T_n-x)}\Big(\kappa_{i+n-1}^2 L_{i+n-1} + \\ (\kappa_{i+n-1} - \kappa_{i+n})^2L_j e^{-\kappa_{i+n}x}\Big)~.
\end{multline*}
Therefore in $x \in [0, T_n]$ the total loss is convex meaning the KKT conditions provide the necessary and sufficient conditions to find the unique minimizer. 

When (\ref{eq: series_end_conditions}) are satisfied, $\frac{\partial\mathbf{L}_{n}}{\partial x} \geq 0$. Hence, the minimum is obtained at the left extremum $x^* = 0$ which gives the total expected loss to be $\mathbf{L}^*_{n,1} = e^{-\kappa_{i+n-1} T}\Big(L_{i+n-1}+L_{i+n}\Big)$. This is equivalent to nodes $v_{i+n-1}$ and $v_{i+n}$ being replaced with a node $v_{eq}$ with $L_{eq} = L_{i+n-1} + L_{i+n}$ and $\kappa_{eq} = \kappa_{i+n-1}$. It now acts as the effective last node of the series connection.

When (\ref{eq: series_end_conditions}) are not satisfied i.e when $\kappa_{i+n-1} < \kappa_{i+n}$ and $L_{i+n-1} < (\frac{\kappa_{i+n}}{\kappa_{i+n-1}}-1)L_{i+n}$, $\frac{\partial\mathbf{L}}{\partial x} < 0$ if $x=0$. Hence when $x \in [0, T]$, the minimum occurs when $\frac{\partial\mathbf{L}}{\partial x} = 0$ where $x^* = x_    {i+n}^* = \frac{-1}{\kappa_{i+n}}\log\Big(\frac{\kappa_{i+n-1}}{\kappa_{i+n}-\kappa_{i+n-1}}\frac{L_{i+n-1}}{L_{i+n}}\Big)$. With $x_{i+n}^*$, the expected loss over $\{v_{i+n-2} \rightarrow v_{i+n-1} \rightarrow v_{i+n}\}$ is:
\begin{multline*}
    \mathbf{L}_{n-1} = e^{-\kappa_{i+n-2}x_{i+n-2}}\Big(L_{i+n-2} + \\ \frac{\kappa_{i+n}L_{i+n}}{\kappa_{i+n}-\kappa_{i+n-1}}e^{-\kappa_{i+n-1}x_{i+n-1}}\Big)
\end{multline*}
Similar to the above procedure, taking the first and second derivatives of $\mathbf{L}_{n-1}$, we can show that when $\kappa_{i+n-2} \geq \kappa_{i+n-1}$ or $L_{i+n-2} \geq \Big(\frac{\frac{\kappa_{i+n-1}}{\kappa_{i+n-2}} - 1}{1-\frac{\kappa_{i+n-1}}{\kappa_n}}\Big)L_{i+n-1}$, $x_{i+n-1}^* = 0$ and if this is not satisfied then $x_{i+n-1}^* \neq 0$. Similarly, using the same procedure, it can be shown that (\ref{eq: series_mid_conditions}) hold for all nodes $\{i+1, i+2, \ldots ,i+n-1\}$.
\end{proof}

\subsection{Proof of Lemma~\ref{lemma:series-reduction}}

\begin{proof}
Let $T$ denote the total budget spent at the optimal investment profile on nodes $v_j,~ \forall j\in \{i, i+1, \ldots, i+n, t\}$. {We first show that given a sequence $\{v_i\rightarrow v_j\}$, under the assumptions of the lemma, the first two nodes can be replaced by an equivalent node with $\kappa_{eq}=\kappa_i$ and $L_{eq}=\frac{\kappa_{j}}{\kappa_{j}-\kappa_i}L_i
\Big(\frac{\kappa_{j}-\kappa_i}{\kappa_{i}} \frac{L_j}{L_i}\Big)^{\frac{\kappa_i}{\kappa_j}}$. 

From the proof of Lemma \ref{lemma:series_zero_investments}, we know that the total loss is convex, and the KKT conditions are necessary and sufficient to find the (unique) minimizer. 
If $\kappa_i < \kappa_j$ and  $L_i < (\frac{\kappa_j}{\kappa_i}-1)L_j$, then the unique minimizer of the loss is $x^*=\frac{1}{\kappa_j}\log{\frac{\kappa_{j}-\kappa_i}{\kappa_{i}} \frac{L_j}{L_i}}$. The total loss under this profile is given by $\mathbf{L} = (\frac{\kappa_{j}}{\kappa_{j}-\kappa_i}L_i\Big(\frac{\kappa_{j}-\kappa_i}{\kappa_{i}} \frac{L_j}{L_i}\Big)^{\frac{\kappa_i}{\kappa_j}})e^{-\kappa_i T}$, establishing the claimed equivalence of the reduction. 

We next prove the Lemma by induction. For the base case, consider a series path $\{v_i\rightarrow v_{i+1} \rightarrow v_{i+2}\}$. We first reduce the last two nodes to an equivalent node $v_{int}$ with parameters $\kappa_{int}=\kappa_{i+1}$ and $L_{int} = \frac{\kappa_{i+2}}{\kappa_{i+2}-\kappa_{i+1}}L_{i+1}\Big(\frac{\kappa_{i+2}-\kappa_{i+1}}{\kappa_{i+1}} \frac{L_{i+2}}{L_{i+1}}\Big)^{\frac{\kappa_{i+1}}{\kappa_{i+2}}}$. We now repeat this for $\{v_i\rightarrow v_{int}\}$. This leads to an equivalent node $\kappa_{eq}=\kappa_i$ and equivalent loss
\begin{align*}
L_{eq} &= \frac{\kappa_{i+1}}{\kappa_{i+1}-\kappa_{i}}L_{i}\Big(\frac{\kappa_{i+1}-\kappa_{i}}{\kappa_{i}} \frac{L_{int}}{L_{i}}\Big)^{\frac{\kappa_{i}}{\kappa_{i+1}}}\\
& = \frac{\kappa_{i+1}L_{i}}{\kappa_{i+1}-\kappa_{i}}\Big(\frac{\kappa_{i+1}-\kappa_{i}}{\kappa_{i}}\frac{\kappa_{i+2}}{\kappa_{i+2}-\kappa_{i+1}} \frac{L_{i+1}}{L_{i}}\Big)^{\frac{\kappa_{i}}{\kappa_{i+1}}}\notag\\
& \hspace{1.8in}\Big(\frac{\kappa_{i+2}-\kappa_{i+1}}{\kappa_{i+2}}\frac{L_{i+2}}{L_{i+1}}\Big)^{\frac{\kappa_{i}}{\kappa_{i+2}}}
\end{align*}
which matches \eqref{eq:series-long}. 
Next for the inductive step, consider the path $\{v_i \rightarrow v_{i+1} \rightarrow \cdots \rightarrow v_{i+n} \rightarrow v_t\}$. We assume the statement is true for any sequence of $n$ series nodes, including $\{v_{i+1} \rightarrow v_{i+1} \rightarrow \cdots \rightarrow v_{i+n} \rightarrow v_t\}$, and replace these with a node with the equivalent loss as in \eqref{eq:series-long}. Then, applying steps similar to those in the induction base, it is straightforward to verify that \eqref{eq:series-long} holds for the $n+1$ node sequence as well.}
\end{proof}

\subsection{Proof of Lemma~\ref{lemma: non-zero parallel}}

\rev{\begin{remark}
\label{rem: optimal_strategy}
While investing on nodes in a subset of paths in $\mathcal{P}_{sg}$ would reduce their expected losses, losses along other paths may not be lower. Hence the optimal strategy would be to equalize expected losses across \emph{all} paths. If the expected losses across all paths cannot be equalized, there exist two paths $P_k$  and $P_{k'}$ with their expected losses such that $\mathbf{L}^*_k > \mathbf{L}^*_{k'}$ with multiple paths potentially having the same highest expected loss.
\end{remark}}

\begin{proof}
Assume $I = \{i+1, i+2 \ldots i+n\}$ and a budget of $T$ spent over all nodes in the parallel network. Consider the loss across some path $\{v_i \rightarrow v_j \rightarrow v_g\}$ where $j \in I$.
We begin the proof by arguing that for a set of parallel paths, the optimal investment profile \rev{looks to} equalizes the expected losses across all parallel paths. We prove this by contradiction. 

\rev{From Remark \ref{rem: optimal_strategy} we know that that an optimal strategy would be to equalize the losses across all paths.}
Assume that there exists an optimal profile $\Tilde{\mathbf{x}}^*$ that does not equalize the expected loss across all paths. 
This means there exists at least one path \rev{(potentially multiple equivalent paths)} $P_j = \{v_i \rightarrow v_{i+j} \rightarrow v_g\}$ such that $(L_i + (L_{i+j}+L_g)e^{-k_{i+j}\Tilde{x}^*_{i+j}})e^{-k_{i}\Tilde{x}^*_{i}} > (L_i + (L_{i+j}+L_g)e^{-k_{i+r}\Tilde{x}^*_{i+r}})e^{-k_{i}\Tilde{x}^*_{i}}$ \rev{for some or multiple} $~ r \in \{1,2,~...~n\}/\{j\}$. \rev{Let the expected loss under this strategy be} $\mathbf{L}_{par}(\mathbf{\Tilde{x}}^*) = ((L_{i+j}+L_g)e^{-k_{i+j}\Tilde{x}^*_{i+j}})e^{-k_{i}\Tilde{x}^*_{i}}$. Under an equalizing strategy $\mathbf{x}^*$, $\Tilde{x}^*_j < x^*_j$. This gives, $(L_i + (L_{i+j}+L_g)e^{-k_{i+j}\Tilde{x}^*_{i+j}})e^{-k_{i}\Tilde{x}^*_{i}} > (L_i + (L_{i+j}+L_g)e^{-k_{i+j}x^*_{i+j}})e^{-k_{i}x^*_{i}}$ which implies $\mathbf{L}_{par}(\Tilde{\mathbf{x}}^*) > \mathbf{L}_{par}(\mathbf{x}^*)$, contradicting the initial assumption.
If the losses across all paths are equal, we get
\[x_i = T + \sum_{k \in I \setminus \{j\}}\frac{1}{\kappa_{k}}\log\Big(\frac{L_{k}+L_g}{L_j+Lg}\Big) - \kappa_j\kappa_{par}\kappa_ix_j\]
We now optimize (\ref{eq: atkOptim}) over this path, i.e.
\[\mathbf{L}_j = \min_{x_{i}, x_j}e^{\kappa_ix_i}\Big(L_i + (L_j + L_g)e^{-\kappa_jx_j}\Big)\]
The first and second derivatives of $\mathbf{L}_j$ are
{\scriptsize{\[{\tfrac{\partial\mathbf{L}_j}{\partial x_j} = \kappa_je^{\kappa_j\kappa_{par}\kappa_ix_j}(\kappa_{par}\kappa_iL_i + (\kappa_{par}\kappa_i - 1)(L_j+L_g)e^{-\kappa_jx_j})}\]}}
{\scriptsize{\[\tfrac{\partial^2\mathbf{L}_j}{\partial x^2_j} = \kappa^2_je^{\kappa_j\kappa_{par}\kappa_ix_j}(\kappa^2_{par}\kappa^2_iL_i + (\kappa_{par}\kappa_i - 1)^2(L_j+L_g)e^{-\kappa_jx_j})\]}}
Similar to the series case, the total expected loss is convex. When $ L_i \geq \Big(\frac{1 - \kappa_i \kappa_{par}}{\kappa_i \kappa_{par}}\Big)(L_j + L_g)$, $\frac{\partial\mathbf{L}_j}{\partial x_j} \geq 0$ and the minimum occurs at $x^*_j = 0$, \rev{effectively removing it from the optimization problem}. The same reason followed for Lemma \ref{lemma:series_zero_investments} can be used to prove that $x^*_j \neq 0$ when this condition is not met.

In the general case, there may be multiple parallel nodes which satisfy these conditions. We are left to prove that it is the node with the lowest stand-alone loss ($L_{i+1}$ here) that receives no investment at the optimal investment profile. We prove this by contradiction. Assume there exists $\mathbf{\Tilde{x}}^*$ such that $x^*_j = 0$ for some $j \neq i+1$. Under such an investment strategy, the expected loss across path $P_j = (v_i \rightarrow v_j \rightarrow v_g)$ is $\Tilde{L}^*_j = \Big(L_i + L_j + L_g\Big)e^{-k_i\Tilde{x}^*_i}$. But $\Tilde{L}^*_j > \Big(L_i + (L_{i+1}+L_g)e^{-k_{i+1}\Tilde{x}^*_{i+1}}\Big)e^{-k_i\Tilde{x}^*_i}\Big)$ since $L_j > L_{i+1}$, meaning $\mathbf{\Tilde{x}}^*$ is not the optimal, which is a contradiction and the optimal strategy has $x^*_{i+1} = 0$.
\end{proof}

\subsection{Proof of Lemma~\ref{lemma: parallel}}
\begin{proof}
Assume a budget of $T$ is spent over the nodes in $\mathcal{G}$. 
Since the losses across all paths are equal, we get:
\[x_r = \Big[\frac{\kappa_n}{\kappa_r}x_n + \log\Big(\frac{L_r + L_g}{L_n + L_g}\Big)\Big] ~ \forall ~ r =\{i+1,i+2, \ldots i+n-1\}\] and 
\[x_1 = T - \Big[ 1 + \sum_{r=i+1}^{i+n-1}\frac{\kappa_n}{k_{r}}\Big]x_n -\log\Big(\frac{\prod_{r=2}^{n-1}(L_r+L_g)}{(L_n+L_g)^{n-1}}\Big) \]
The total expected loss over $\mathcal{G}$ is
\[\mathbf{L}_{par} = \min_{\mathbf{x}} \Big(L_i + (L_{i+n} + L_g)e^{-\kappa_{i+n}x_{i+n}}\Big)e^{-\kappa_ix_i}\]
It can easily be verified that $\mathbf{L}_{par}$ is convex and the KKT conditions are necessary and sufficient to find the unique minimizer. Since the conditions of Lemma \ref{lemma: non-zero parallel} do not hold for all parallel nodes, the solution to $\mathbf{L}_{par}$ is
\begin{equation*}
\label{eq: parallel investments}
    \begin{gathered}
        x^*_r = \tfrac{-1}{\kappa_r} \log\Big(\frac{\kappa_i\kappa_{par}}{1-\kappa_i\kappa_{par}}(\frac{L_i}{L_r+L_g})\Big), \forall r = \{i+1, \ldots, i+n\} \\
        x^*_i = B - \sum^{n}_{r=1} x^*_r
    \end{gathered}
\end{equation*}
and 
\begin{equation*}
\begin{split}
    \mathbf{L}^*_{par} = \frac{L_i}{1-\kappa_i \kappa_{par}}\prod_{j=i+1}^{i+n}\Big(\frac{\kappa_i\kappa_{par}}{1-\kappa_i\kappa_{par}} \Big( \frac{L_i}{L_{j}+L_g}\Big)\Big)^{\frac{-\kappa_i}{\kappa_{j}}} e^{-\kappa_iB}
\end{split}
\end{equation*}
concluding the proof.
\end{proof}

\subsection{Proof of Lemma~\ref{lemma:zero destination node}}

\begin{proof}
Assume a budget of $T$ is spent over $\mathcal{G}$. Similar to Lemma \ref{lemma: non-zero parallel}, we know that the losses across all paths are equal. Now, we get
\[x_r = \frac{\kappa_i}{\kappa_r}x_i ~\forall~ r \in \mathcal{I}\setminus \{i\}\] and
\[x_t = T - \Big(1 + \sum_{r=i+1}^{i+n}\frac{\kappa_i}{\kappa_r}\Big)x_i\]
The total expected loss over $\mathcal{G}$ is 
\[\mathbf{L}_{in} = \Big(L + L_te^{-\kappa_t x_t}\Big)e^{-\frac{(T - x_t)}{\kappa_{par}} }\]
The first and second derivative of $\mathbf{L}_{in}$ are
\[\frac{\partial\mathbf{L}_{in}}{\partial x_t} = \Big(\frac{L}{\kappa_{par}} + \Big(\frac{1}{\kappa_{par}} - \kappa_t\Big)L_te^{-\kappa_t x)_t}\Big)e^{x_t/\kappa_{par}}\]
\[\frac{\partial\mathbf{L}^2_{in}}{\partial x_t^2} = \Big(\frac{L}{\kappa_{par}^2} + \Big(\frac{1}{\kappa_{par}} - \kappa_t\Big)^2L_te^{-\kappa_t x)_t}\Big)e^{x_t/\kappa_{par}}\]
Similar to the previous cases, the loss function is convex and when $\kappa_t\kappa_{par} \leq 1$ or $L \leq (\kappa_t\kappa_{par} - 1)L_t$, the minimum is obtained at the left extremum, i.e. at $x^*_t = 0$. And similar to the previous cases, it can be proved that $x^*_t \neq 0$ if these conditions are not met.
\end{proof}

\subsection{Proof of Lemma~\ref{lemma: inputs}}
\begin{proof}
The proof follows the same steps as Lemma \ref{lemma: parallel} and is omitted here for the sake of brevity.
\end{proof}

\end{document}